\begin{document}

%\title{Parameterized Complexity \hspace{10em} of Graph Constraint Logic}
\title{Parameterized Complexity of Graph Constraint Logic}
\titlerunning{Parameterized Complexity of Graph Constraint Logic}

\author{Tom C. van der Zanden}

\affil{Department of Computer Science, Utrecht University, Utrecht, The Netherlands \\\texttt{T.C.vanderZanden@uu.nl}}
\authorrunning{T.\,C. van der Zanden}

\Copyright{Tom C. van der Zanden}%mandatory, please use full first names. LIPIcs license is "CC-BY";  http://creativecommons.org/licenses/by/3.0/

\subjclass{F.2.2 Nonnumerical Algorithms and Problems}
\keywords{Nondeterministic Constraint Logic $\cdot$ Reconfiguration Problems $\cdot$ Parameterized Complexity $\cdot$ Treewidth $\cdot$ Bandwidth}

%Editor-only macros:: begin (do not touch as author)%%%%%%%%%%%%%%%%%%%%%%%%%%%%%%%%%%
%\serieslogo{}%please provide filename (without suffix)
%\volumeinfo%(easychair interface)
  %{Billy Editor and Bill Editors}% editors
  %{2}% number of editors: 1, 2, ....
  %{Conference title on which this volume is based on}% event
  %{1}% volume
  %{1}% issue
  %{1}% starting page number
%\EventShortName{}
%\DOI{10.4230/LIPIcs.xxx.yyy.p}% to be completed by the volume editor
% Editor-only macros::end %%%%%%%%%%%%%%%%%%%%%%%%%%%%%%%%%%%%%%%%%%%%%%%

\maketitle

\renewcommand{\theenumi}{\alph{enumi}}

\usetikzlibrary{arrows,automata,decorations.markings,calc,positioning}

\makeatletter
\tikzset{
    position/.style args={#1 degrees from #2}{
        at=(#2.#1), anchor=#1+180, shift=(#1:\tikz@node@distance)
    }
}
\makeatother

\tikzset{fontscale/.style = {font=\relsize{#1}}}

\tikzset{->-/.style={decoration={
  markings,
  mark=at position #1 with {\arrow{>}}},postaction={decorate}}}

\tikzset{-<-/.style={decoration={
  markings,
  mark=at position #1 with {\arrow{<}}},postaction={decorate}}}

\tikzset{
    between/.style args={#1 and #2}{
         at = ($(#1)!0.5!(#2)$)
    }
}

\DeclareGraphicsExtensions{.pdf}
\graphicspath{{./img/}}

\renewcommand\theContinuedFloat{\alph{ContinuedFloat}}

\newlength{\problemoffset}
\setlength{\problemoffset}{0.0cm}

% Decision problem macro - adapted from http://courses.cs.vt.edu/cs3824/Fall2011/homework2.tex‎
% A command for formatting decision problems a la Garey and Johnson
\newcommand{\decision}[3]{%     \decision{NAME}{INSTANCE}{QUESTION}
%\begin{samepage} %it would be nice to SAMEPAGE decision problem definitions, but it doesn't fit
\begin{list}{}{
\setlength{\leftmargin}{\problemoffset}
\setlength{\rightmargin}{\problemoffset}
\setlength{\parsep}{0pt}
\setlength{\itemsep}{2pt}
\setlength{\topsep}{\itemsep}
\setlength{\partopsep}{\itemsep}
}
\item
{\textsc{\normalsize #1}}
\item
{\textbf{Instance:} #2}
\item
{\textbf{Question:} #3}
\end{list}
%\end{samepage}
}

\newcommand{\ppdecision}[4]{%     \decision{NAME}{INSTANCE}{PARAMETER}{QUESTION}
\begin{samepage} %comment if samepage doesn't fit
\begin{list}{}{
\setlength{\leftmargin}{\problemoffset}
\setlength{\rightmargin}{\problemoffset}
\setlength{\parsep}{0pt}
\setlength{\itemsep}{2pt}
\setlength{\topsep}{\itemsep}
\setlength{\partopsep}{\itemsep}
}
\item
{\textsc{\normalsize #1}}
\item
{\textbf{Instance:} #2}
\item
{\textbf{Parameter:} #3}
\item
{\textbf{Question:} #4}
\end{list}
\end{samepage}
}

\newclass{\EXPTIME}{EXPTIME}
\newclass{\NEXPTIME}{NEXPTIME}

\newcommand{\NCL}[1][]{
\ifthenelse{\equal{#1}{}}
	{\textsc{\normalsize Nondeterministic Constraint Logic}\xspace}
	{\textsc{\normalsize #1 Nondeterministic Constraint Logic}\xspace}
}

\newcommand{\scNCL}[1][]{
\ifthenelse{\equal{#1}{}}
	{\textsc{\normalsize Nondeterministic Constraint Logic}\xspace}
	{\textsc{\normalsize #1 Nondeterministic Constraint Logic}\xspace}
}

\newcommand{\CGS}[0]{{\textsc{\normalsize Constraint Graph Satisfiability}}\xspace}

\newcommand{\scCGS}[1]{{\textsc{\normalsize Constraint Graph Satisfiability}}\xspace}

\newtheorem*{claim}{\theoremstyle{remark}Claim}

\begin{abstract}
Graph constraint logic is a framework introduced by Hearn and Demaine \cite{hearn02}, which provides several problems that are often a convenient starting point for reductions. We study the parameterized complexity of \textsc{\normalsize Constraint Graph Satisfiability} and both bounded and unbounded versions of \NCL (NCL) with respect to solution length, treewidth and maximum degree of the underlying constraint graph as parameters. As a main result we show that restricted NCL remains $\PSPACE$-complete on graphs of bounded bandwidth, strengthening Hearn and Demaine's framework. This allows us to improve upon existing results obtained by reduction from NCL. We show that reconfiguration versions of several classical graph problems (including independent set, feedback vertex set and dominating set) are $\PSPACE$-complete on planar graphs of bounded bandwidth and that Rush Hour, generalized to $k\times n$ boards, is $\PSPACE$-complete even when $k$ is at most a constant.
\end{abstract}

\section{Introduction}

\NCL (NCL) was introduced by Hearn and Demaine in \cite{hearn02} and extended in \cite{hdbook} to a more general graph constraint logic framework. The framework provides a number of problems complete for various complexity classes, that aim to provide a convenient starting point for reductions proving the hardness of games and puzzles. We study the \textsc{\normalsize Constraint Graph Satisfiability} problem and \NCL in a parameterized setting, considering (combinations of) solution length, treewidth and maximum degree of the underlying constraint graph as parameters.

As part of the constraint logic framework \cite{hearn02}, Hearn and Demaine provide a restricted variant of \NCL (\textsc{\normalsize restricted NCL}), in which the constraint graph is planar, 3-regular, uses only weights in $\{1,2\}$ and the graph is constructed from only two specific vertex types (AND and OR).  \textsc{\normalsize Restricted NCL} is $\PSPACE$-complete, and is (due to the restrictions) a particularly suitable starting point for reductions. Hearn and Demaine's reduction creates graphs of unbounded treewidth. We strengthen their result, by providing a new reduction showing that \textsc{\normalsize restricted NCL} remains $\PSPACE$-complete, even when restricted to graphs of bandwidth at most a given constant (which is a subclass of graphs of treewidth at most a given constant). We show hardness by reduction from \textsc{\normalsize $H$-Word Reconfiguration} \cite{wrochna14}.

The puzzle game Rush Hour, when generalized to $n\times n$ boards, is $\PSPACE$-complete \cite{bf02}. Hearn and Demaine provide a reduction from NCL to Rush Hour \cite{hearn05}. As a consequence of this reduction and our improved hardness result for NCL, we show that Rush Hour is $\PSPACE$-complete even when played on $k\times n$ boards, where $k$ is a constant. This is in contrast to the result of Ravikumar \cite{pegs} that Peg Solitaire, a game $\NP$-complete on $n\times n$ boards, is linear time solvable on $k\times n$ boards for any fixed $k$.

NCL is also has applications in showing the hardness of reconfiguration problems \cite{is-r,ds-r,gcol,ito08}. For some reconfiguration problems, their hardness on planar graphs of low maximum degree is known by reduction from NCL \cite{ds-r,ito08} while their hardness on bounded bandwidth graphs is known by reduction from \textsc{\normalsize $H$-Word Reconfiguration} \cite{wrochna14,ds-r,raman14}. Our reduction, which combines techniques from Hearn and Demaine's constraint logic \cite{hdbook} and the reductions from \textsc{\normalsize $H$-Word Reconfiguration} in \cite{wrochna14,raman14} unifies these results: we show that a number of reconfiguration problems (including Independent Set, Vertex Cover and Dominating Set) are $\PSPACE$-complete on low-degree, planar graphs of bounded bandwidth. Previously, hardness was known only on graphs that \emph{either} are planar and have low degree \emph{or} have bounded bandwidth - we show that the problems remain hard even when both of these conditions hold simultaneously. Note that while a graph of bounded bandwidth also has bounded degree, the graphs created in these reductions have quite large bandwidth. The degree bounds we obtain are much tighter than what would be obtained from the bandwidth bound alone.

Our results concerning the hardness of constraint logic problems are summarized in Table \ref{tab:overview}. This table shows the parameterized complexity of \textsc{\normalsize Constraint Graph Satisfiability} (CGS), unbounded configuration-to-edge (C2E) and configuration-to-configuration (C2C) variants of \NCL and their respective bounded counterparts (\textsc{\normalsize bC2E} and \textsc{\normalsize bC2C}) with respect to solution length ($l$), maximum degree ($\Delta$) and treewidth ($tw $). If a traditional complexity class is listed this means that the problem is hard for this class even when restricted to instances where the parameter is at most a constant.

\begin{table}[h]
  \centering
  \caption{Parameterized complexity of graph constraint logic problems.}
  \label{tab:overview}
  \begin{tabular}{|c|c|c|c|c|c|c|c|}
    \cline{3-7}
    \multicolumn{2}{c|}{} & \multicolumn{5}{c|}{Problem} \\
    \cline{3-7}
    \multicolumn{2}{c|}{} & \textsc{CGS} & \textsc{C2E} & \textsc{C2C} & \textsc{bC2E} & \textsc{bC2C} \\
    \hline
    \multirow{3}{*}{\begin{sideways}  Parameters ~~ \end{sideways}} & - & $\NP$-C & $\PSPACE$-C & $\PSPACE$-C & $\NP$-C & $\NP$-C \\
    \cline{2-7}
    & $l$ & - & $\W[1]$-hard & $\W[1]$-hard & $W[1]$-hard & $\FPT$ \\
    \cline{2-7}
    & $l+\Delta$ & - & $\FPT$ & $\FPT$ & $\FPT$ & $\FPT$ \\
    \cline{2-7}
    & $tw$ & weakly $\NP$-C & $\PSPACE$-C & $\PSPACE$-C & weakly $\NP$-H & weakly $\NP$-H \\
    \cline{2-7}
    & $tw+\Delta$ & $\FPT$ & $\PSPACE$-C & $\PSPACE$-C & $\FPT$ & $\FPT$ \\
    \hline
  \end{tabular}
 
\end{table}

\section{Preliminaries}

\subsection{Constraint Logic}

\begin{definition}[Constraint Graph]
A \emph{constraint graph} is a graph with edge weights and vertex weights. A \emph{legal configuration} for a constraint graph is an assignment of an orientation to each edge such that for each vertex, the total weight of the edges pointing into that vertex is at least that vertex' weight (its \emph{minimum inflow}).
\end{definition}

%Figure: AND and OR vertices
\begin{figure}[b]
     \centering
     \hfill
     \subfloat[][OR vertex] {
         \begin{tikzpicture}[>=stealth',shorten >=1pt,auto,node distance=1.5cm,line width=2.5pt]
         \tikzstyle{every state}=[fill=none,draw=none,thin,text=black,scale=0.6]

  \node[state,fill=gray!60,scale=1.3,draw=black]         (A)                             {\textbf{2}};
  \node[state]         (B) [position=330 degrees from A] {};
  \node[state]         (D) [position=210 degrees from A] {};
  \node[state]         (C) [position=90 degrees from A] {};

  \path (A) edge[blue]              node[black] {2} (B)
                 edge[blue]              node[black] {2} (C)
           (D)  edge[blue]              node[black] {2} (A);
\end{tikzpicture} 
         \label{fig:orv}
     }
     \hfill
     \subfloat[][AND vertex] {
         \begin{tikzpicture}[>=stealth',shorten >=1pt,auto,node distance=1.5cm,line width=2.5pt]
         \tikzstyle{every state}=[fill=none,draw=none,thin,text=black,scale=0.6]

  \node[state,fill=gray!60,scale=1.3,draw=black]         (A)                             {\textbf{2}};
  \node[state]         (B) [position=330 degrees from A] {};
  \node[state]         (D) [position=210 degrees from A] {};
  \node[state]         (C) [position=90 degrees from A] {};

  \path (A) edge[red,semithick]              node[black] {1} (B)
                 edge[blue]              node[black] {2} (C)
           (D)  edge[red,semithick]              node[black] {1} (A);
\end{tikzpicture} 
         \label{fig:andv}
     }
     \hfill\null
     \caption{The two vertex types from which a restricted constraint graph is constructed: (a) OR vertex and (b) AND vertex. Following the convention set in \cite{hearn02}, as a mnemonic weight $2$ edges are drawn blue (dark grey) and thick, while weight $1$ edges are drawn red (light grey) and thinner.}
     \label{fig:andorv}
\end{figure}
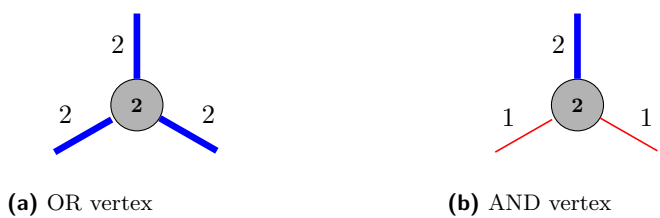

\newpage

A fundamental decision problem regarding constraint graphs is that of their satisfiability:

\decision{Constraint Graph Satisfiablility}{A constraint graph $G$.}{Does $G$ have a legal configuration?}

\textsc{\normalsize Constraint Graph Satisfiability (CGS)} is $\NP$-complete \cite{hdbook}.

An important problem regarding constraint graph configurations is whether they can be reconfigured into each other:

\decision{Nondeterministic Constraint Logic (C2C)}{A constraint graph $G$ and two legal configurations $C_1,C_2$ for $G$.}{Is there a sequence of legal configurations from $C_1$ to $C_2$, where every configuration is obtained from the previous configuration by changing the orientation of one edge?}

This problem is called the configuration-to-configuration (C2C) variant of \NCL. It is $\PSPACE$-complete \cite{hdbook}. The configuration\hyp{}to\hyp{}edge variant (C2E) is also $\PSPACE$-complete \cite{hdbook}:

\decision{Nondeterministic Constraint Logic (C2E)}{A constraint graph $G$, a target edge $e$ from $G$ and an initial legal configuration $C_1$ for $G$.}{Is there a sequence of legal configurations, starting with $C_1$, where every configuration is obtained from the previous by changing the orientation of one edge, so that eventually $e$ is reversed?}

For the \textsc{\normalsize C2C} and \textsc{\normalsize C2E} problems, \textsc{\normalsize bC2C} and \textsc{\normalsize bC2E} denote their bounded variants which ask whether there exists a reconfiguration sequence in which each edge is reversed at most once. These problems are $\NP$-complete \cite{hdbook}.

Hearn and Demaine \cite{hdbook} consider a restricted subset of constraint graphs, which are planar and constructed using only two specific types of vertices: AND and OR vertices (Figure \ref{fig:andorv}).

The OR vertex has minimum inflow $2$ and three incident weight $2$ edges. Its inflow constraint is thus satisfied if and only if at least one of its incident edges is directed inwards (resembling an OR logic gate). The AND vertex has minimum inflow $2$, two incident weight $1$ edges and one incident weight $2$ edge. Its constraint is thus satisfied if and only if both weight $1$ edges are directed inwards or the weight $2$ edge is directed inwards (resembling an AND logic gate). Both C2C and C2E NCL remain $\PSPACE$-complete under these restrictions.

\subsection{Constraint Logic Gadgets}

Some constructions in this paper use existing gadgets (such as crossover) for NCL reductions due to Hearn and Demaine \cite{hdbook}. For the purpose of being self-contained, we reproduce these gadgets here and state (without proof) their functionality.

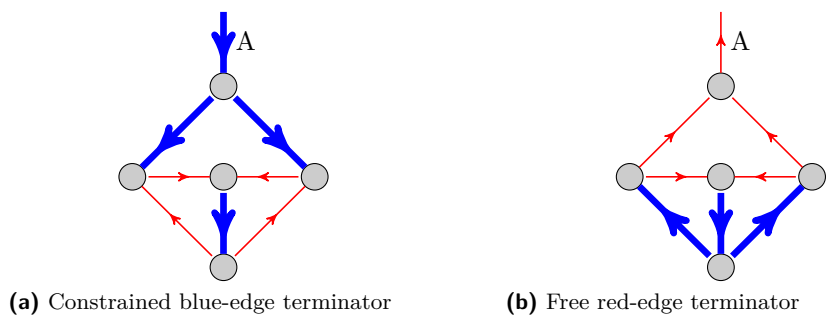
\begin{figure}[b]
     \centering
     \hfill
     \subfloat[][Constrained blue-edge terminator] {
     \hspace{1.2cm} \begin{tikzpicture}[>=stealth',shorten >=1pt,auto,node distance=3cm,line width=2.5pt]
         \tikzstyle{every state}=[fill=gray!40,draw=black,thin,text=black,scale=0.4]

  \node[state](A) [] {};
  \node[state](Ae) [above of=A,fill=none,draw=none] {};
  
  \node[state](M) [below of=A] {};         
  \node[state](L) [left of=M] {};         
  \node[state](R) [right of=M] {};         
  \node[state](D) [below of=M] {};

  \path (A) edge[blue,-<-=0.65] node[black,right] {A} (Ae);
  
  \path (L) edge[blue,-<-=0.5] (A)
        (L) edge[red,semithick,-<-=0.5] (D)
        (R) edge[red,semithick,-<-=0.5] (D)
        (R) edge[blue,-<-=0.5] (A)
        (M) edge[red,semithick,-<-=0.5] (L)
        (M) edge[red,semithick,-<-=0.5] (R)
        (D) edge[blue,-<-=0.7] (M);
\end{tikzpicture} \hspace{1.2cm}
         \label{fig:blue-term}
     }
     \hfill
     \subfloat[][Free red-edge terminator] {
\hspace{1.2cm} \begin{tikzpicture}[>=stealth',shorten >=1pt,auto,node distance=3cm,line width=2.5pt]
         \tikzstyle{every state}=[fill=gray!40,draw=black,thin,text=black,scale=0.4]

  \node[state](A) [] {};
  \node[state](Ae) [above of=A,fill=none,draw=none] {};
  
  \node[state](M) [below of=A] {};         
  \node[state](L) [left of=M] {};         
  \node[state](R) [right of=M] {};         
  \node[state](D) [below of=M] {};

  \path (A) edge[red,semithick,->-=0.6] node[black,right] {A} (Ae);
  
  \path (L) edge[red,semithick,->-=0.5] (A)
        (L) edge[blue,-<-=0.6] (D)
        (R) edge[blue,-<-=0.6] (D)
        (R) edge[red,semithick,->-=0.5] (A)
        (M) edge[red,semithick,-<-=0.5] (L)
        (M) edge[red,semithick,-<-=0.5] (R)
        (D) edge[blue,-<-=0.7] (M);
\end{tikzpicture} \hspace{1.2cm}
         \label{fig:red-term}
     }
     \hfill\null
     \caption{Gadgets for terminating loose edges. The constrained blue-edge terminator (a) forces the blue edge $A$ to point into the gadget, while the free red-edge terminator (b) allows the red edge $A$ to point out of the gadget.}
     \label{fig:terminators}
\end{figure}

\subparagraph*{Edge Terminators.} The constrained blue-edge (Figure \ref{fig:blue-term}) terminator allows us to have a loose blue edge that is forced to point outwards, effectively removing the edge from the graph while still meeting the requirement that the graph is built from only AND and OR vertices. The free red-edge terminator (Figure \ref{fig:red-term}) allows us to have a loose red edge whose orientation can be freely chosen, effectively decreasing the minimum inflow of the vertex to which it is incident by one.

\subparagraph*{Red-blue Conversion.} It is useful to be able to convert a blue edge to a red edge, i.e. we require a gadget which has a blue edge that can (be reconfigured to) point outwards if and only if its red edge is pointing inwards and vice-versa. Hearn and Demaine \cite{hdbook} provide a construction that allows red-blue conversion in pairs, but also note a simpler construction is possible: an AND vertex, with one of its red edges attached to a free red-edge terminator (Figure \ref{fig:red-term}) can serve as a red-blue conversion gadget.

\subparagraph*{Crossover Gadget.} The crossover gadget (Figure \ref{fig:crossover}) has 4 incident blue edges $A,B,C$ and $D$, with the property that $A$ can (be reconfigured to) point outward only if $B$ is pointing inwards and vice-versa, with the same property also holding for $C$ and $D$.

The crossover gadget is constructed using degree-4 vertices (of minimum inflow 2) that are incident to four weight-1 edges. These can be replaced with the half-crossover gadget (Figure \ref{fig:half-crossover}) to obtain a construction that only uses AND and OR vertices. Note that this replacement requires using red-blue conversion gadgets.

Note that to cross a red edge with a blue edge (or a red edge with another red edge) we can use the aforementioned crossover gadget, paired with red-blue conversion gadgets.

\subparagraph*{Latch Gadget.} The final gadget we require is the latch gadget (Figure \ref{fig:latch}). It can be unlocked by reversing the edge $L$, after which its state (the orientation of edge $A$, and adjacent edges) can be changed. The latch can be locked by reversing the edge $L$ again. It is useful to reduce C2E NCL problems to C2C NCL problems, replacing the target edge in the C2E problem by a latch gadget, and creating a C2C problem in which the goal configuration differs from the start configuration only in the state of the latch gadget.

\begin{figure}[t]
     \centering
     \hfill
     \subfloat[][Crossover] {
         \begin{tikzpicture}[>=stealth',shorten >=1pt,auto,node distance=2cm,line width=2.5pt]
         \tikzstyle{every state}=[fill=gray!40,draw=black,thin,text=black,scale=0.4]

  \node[state](A) [] {};
  \node[state](Ae) [above of=A,fill=none,draw=none] {};
  
  \node[state](AD) [right=0.8 of A] {};  
  \node[state](AC) [left=0.8 of A] {};  
  
  \node[state](C) [below left of=AC] {};
  \node[state](Ce) [left of=C,fill=none,draw=none] {};
  
  \node[state](D) [below right of=AD] {};
  \node[state](De) [right of=D,fill=none,draw=none] {};  

  \node[state](BD) [below left of=D] {};  
  \node[state](BC) [below right of=C] {};  
  
  \node[state](B) [left=0.8 of BD] {};
  \node[state](Be) [below of=B,fill=none,draw=none] {};
  
  \node[state](L) [below right of=AC] {};
  \node[state](R) [below left of=AD] {};  

  \path (A) edge[blue]              node[black,right] {A} (Ae)
           (B) edge[blue]              node[black,right] {B} (Be)
           (C) edge[blue]              node[black,above] {C} (Ce)
           (D) edge[blue]              node[black,above] {D} (De) ;
           
   \path (L) edge[blue] (R);
   
   \path (A) edge[red,semithick] (AD)
            (AD) edge[red,semithick] (D)
            (D) edge[red,semithick] (BD)
            (BD) edge[red,semithick] (B)
            (B) edge[red,semithick] (BC)
            (BC) edge[red,semithick] (C)
            (C) edge[red,semithick] (AC)
            (AC) edge[red,semithick] (A);

   \path (AC) edge[red,semithick] (L)            
            (L) edge[red,semithick] (BC)            
            (BC) edge[red,semithick] (AC) ;         
    \path (AD) edge[red,semithick] (R)            
            (R) edge[red,semithick] (BD)            
            (BD) edge[red,semithick] (AD) ;        
   
\end{tikzpicture} 
         \label{fig:crossover}
     }
     \hfill
     \subfloat[][Half-crossover] {
\begin{tikzpicture}[>=stealth',shorten >=1pt,auto,node distance=2cm,line width=2.5pt]
         \tikzstyle{every state}=[fill=gray!40,draw=black,thin,text=black,scale=0.4]

  \node[state](A) [] {};
  \node[state](Ae) [above of=A,fill=none,draw=none] {};
  
  \node[state](AD) [right=0.6 of A] {};  
  \node[state](AC) [left=0.6 of A] {};  
  
  \node[state](C) [below left of=AC] {};
  \node[state](Ce) [left of=C,fill=none,draw=none] {};
  
  \node[state](D) [below right of=AD] {};
  \node[state](De) [right of=D,fill=none,draw=none] {};  

  \node[state](BD) [below left of=D] {};  
  \node[state](BC) [below right of=C] {};  
  
  \node[state](B) [left=0.6 of BD] {};
  \node[state](Be) [below of=B,fill=none,draw=none] {};
  
  \node[state](L) [below right of=AC,fill=none,draw=none] {};
  \node[state](R) [below left of=AD,fill=none,draw=none] {};  

  \path (A) edge[blue]              node[black,right] {} (Ae)
           (B) edge[blue]              node[black,right] {} (Be)
           (C) edge[blue]              node[black,above] {} (Ce)
           (D) edge[blue]              node[black,above] {} (De) ;
           
   %\path (L) edge[blue] (R);
   
   \path (A) edge[blue] (AD)
            (AD) edge[red,semithick] (D)
            (D) edge[red,semithick] (BD)
            (BD) edge[blue] (B)
            (B) edge[blue] (BC)
            (BC) edge[red,semithick] (C)
            (C) edge[red,semithick] (AC)
            (AC) edge[blue] (A);

   \path (BC) edge[red,semithick] (AC) ;         
    \path (BD) edge[red,semithick] (AD) ;        
   
\end{tikzpicture} 
         \label{fig:half-crossover}
     }
     \hfill\null
     \caption{The crossover gadget (a) may be constructed with help of the half-crossover gadget (b).}
     \label{fig:crossovers}
\end{figure}

\begin{figure}[t]
     \centering
     \begin{tikzpicture}[>=stealth',shorten >=1pt,auto,node distance=3cm,line width=2.5pt]
         \tikzstyle{every state}=[fill=gray!40,draw=black,thin,text=black,scale=0.4]

  \node[state](A) [] {};
  \node[state](Ae) [left of=A,fill=none,draw=none] {};
  
  \node[state](T) [position=30 degrees from A] {};         
  \node[state](B) [position=330 degrees from A] {};         
  \node[state](Te) [right of=T,fill=none,draw=none] {};         
  \node[state](Be) [right of=B,fill=none,draw=none] {};         
  
  \path (A) edge[blue,->-=0.65] node[black,above] {L} (Ae);
  
  \path (Te) edge[red,semithick,-<-=0.65] (T)
           (T) edge[red,semithick,->-=0.5] node[black,right] {A} (B)
           (B) edge[red,semithick,-<-=0.5] (Be)
           (A) edge[blue,->-=0.6] (T)
           (A) edge[blue,-<-=0.6] (B) ;
\end{tikzpicture}
     \caption{The latch gadget can be (un-)locked by reversing edge $L$, and edge $A$ may reverse if and only if the gadget is unlocked. The gadget is shown in its locked state.}
     \label{fig:latch}
\end{figure}

\subsection{$H$-Word Reconfiguration}

To show hardness of NCL on bounded bandwidth graphs, we reduce from the \textsc{\normalsize $H$-Word Reconfiguration} problem, introduced in \cite{wrochna14}.

\begin{definition}[$H$-word]
Let $H=(\Sigma,E)$ where $\Sigma$ is an alphabet and $E\subseteq \Sigma \times \Sigma$ a relation. An \emph{$H$-word} is a word over $\Sigma$ such that every pair of consecutive characters $(a,b)$ is an element of $E$.
\end{definition}

\decision{$H$-Word Reconfiguration}{Two $H$-words $W_s,W_g$ of equal length}{Is there a sequence of $H$-words $W_1,\ldots,W_n$, so that every pair of consecutive words $W_i,W_{i+1}$ can be obtained from each other by changing one character to another and $W_1=W_s,W_n=W_g$?}

\begin{theorem}[Wrochna \cite{wrochna14}]\label{thm:hwordhard}
There exists an $H$ such that \textsc{\normalsize $H$-Word Reconfiguration} is $\PSPACE$-complete.
\end{theorem}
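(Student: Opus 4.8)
The plan is to establish the two directions separately. Membership in $\PSPACE$ is routine and I would dispose of it first: for any fixed $H$ the reconfiguration graph on words of length $m$ has $|\Sigma|^m$ vertices, each storable in polynomial space, and adjacency (whether one single-character change turns one valid $H$-word into another valid $H$-word) is checkable in polynomial time. Connectivity of $W_s$ and $W_g$ can therefore be decided by a nondeterministic search that stores only the current word and a step counter bounded by $|\Sigma|^m$, which needs only polynomially many bits; this is a nondeterministic polynomial-space procedure, so by Savitch's theorem the problem lies in $\PSPACE$. For hardness the guiding idea is to fix a single space-bounded machine whose configuration-connectivity problem is already $\PSPACE$-complete and to encode its configurations as $H$-words over a constant alphabet, so that all of the hardness is carried by the length of the words while $H$ itself stays fixed.

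Concretely, I would reduce from a fixed \emph{reversible} Turing machine $M$ working in linear space: the instance is an input $x$ together with two configurations $C_1,C_2$ of $M$ on $x$, and the question is whether they are connected in the configuration graph of $M$. Reversibility is essential, because the reconfiguration relation on $H$-words is symmetric, so I want the simulated transition relation to be symmetric as well. A reversible machine has a configuration graph that is a disjoint union of simple paths (and cycles), so undirected connectivity coincides with ``reachable by running $M$ forward or backward.'' Arranging that this connectivity problem for a single fixed $M$ is $\PSPACE$-complete is standard (take $M$ to simulate a universal linear-bounded automaton and reversibilize it), which pushes the entire difficulty into producing a faithful word encoding.

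For the encoding I would fix $\Sigma$ to contain one symbol per tape symbol of $M$, composite symbols encoding $(\text{state},\text{tape symbol})$ to mark the head, delimiters for the tape ends, and a few auxiliary \emph{transient} symbols. A configuration of $M$ then maps to a word of fixed length $m = |x| + O(1)$, and length preservation under single-character changes matches the fixed tape length of a linear-space machine. The relation $E$ is designed as a local window constraint in the style of the Cook--Levin tiling argument: I choose $E$ so that a word is a valid $H$-word precisely when it encodes a syntactically legal configuration (exactly one head symbol, consistent delimiters, and so on). Finally I set $W_s$ and $W_g$ to the encodings of $C_1$ and $C_2$.

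The interesting point is the simulation of a computation step. A single transition of $M$ rewrites the cell under the head, moves the head, and changes the state, touching two adjacent cells at once, whereas reconfiguration permits changing only one character per move; I would therefore decompose each transition into a short, fixed-length sequence of single-character changes that passes through the transient symbols, arranging $E$ so that every intermediate word is a valid $H$-word and the first and last words of this little sequence are exactly the encodings of consecutive configurations. The crux, and the step I expect to be hardest, is soundness: I must design $E$ so that these local moves cannot be combined in any unintended way. Specifically, I have to argue that from any reachable valid word the only neighbouring valid words are those lying on an intended forward- or backward-step simulation, so that no illegal partial edit opens a shortcut between configurations that are not connected in $M$. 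Establishing this ``no-cheating'' invariant purely through the pairwise constraints $E$, while still permitting every genuine transition, is the delicate part; once it holds, the map sending a configuration to its encoding is an isomorphism between the configuration graph of $M$ and the relevant part of the $H$-word reconfiguration graph, so $C_1,C_2$ are connected if and only if $W_s,W_g$ are, which completes the reduction.
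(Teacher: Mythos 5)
First, note that the paper does not prove this statement at all: it is imported verbatim from Wrochna \cite{wrochna14}, so the only meaningful comparison is with Wrochna's own proof. Your high-level plan is in fact the same as his: membership via nondeterministic search in polynomial space (your Savitch argument is fine), and hardness by encoding configurations of a fixed space-bounded machine as words over a constant alphabet, with the pair relation $E$ acting as a local (Cook--Levin-style) consistency check and with machine transitions decomposed into short sequences of single-character edits through transient symbols. Your observation that the reconfiguration relation is symmetric, and that one should therefore work with a reversible (forward- and backward-deterministic) machine so that the configuration graph is a disjoint union of paths and cycles, is also a sound way to handle the issue that Wrochna must likewise confront.

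However, there is a genuine gap, and you name it yourself: you never construct $H$, and you never prove the ``no-cheating'' invariant; you only state that once it holds the reduction is complete. That invariant is not a routine verification to be deferred --- it \emph{is} the theorem. The difficulty is structural: $E$ constrains only consecutive pairs of characters, so it cannot enforce any global syntactic property (such as ``exactly one head marker'' or ``the word begins and ends with delimiters'') on the set of \emph{all} valid $H$-words; for instance, the first and last characters of a word are constrained by only a single pair each, so nothing prevents a single edit at an endpoint from spawning a second head or destroying a delimiter unless the alphabet and $E$ are engineered very carefully (e.g., with monotone left-of-head/at-head/right-of-head flags whose allowed successions make such edits locally invalid). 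Consequently the invariant can only hold for words \emph{reachable} from the initial encoding, and proving it requires an induction over all possible single-character changes from an arbitrary reachable word --- a case analysis that depends entirely on the concrete choice of $\Sigma$ and $E$, which your proposal leaves unspecified. Until that construction is exhibited and the induction carried out, the claimed isomorphism between the machine's configuration graph and the relevant component of the $H$-word reconfiguration graph is an aspiration, not a proof; this is precisely the technical core that occupies Wrochna's argument.
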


\subsection{Graph Parameters}

As one possible parametrization we consider treewidth. The main result (Theorems \ref{thm:c2cnclhard} and \ref{thm:c2enclhard}) considers graphs of bounded bandwidth, a subclass of graphs of bounded treewidth \cite{bodl98}.

\begin{definition}[Treewidth]
A \emph{tree decomposition} of a graph $G=(V,E)$, is a tree $T$ in which every vertex $t$ of $T$ is associated with a bag $X_t\subseteq V$, such that:
\begin{itemize}
    \item For all $v\in V$, there is a $t$ such that $v\in X_t$
    \item For all $(u,v)\in E$, there is a $t$ such that $u\in X_t$ and $v\in X_t$
    \item For all $v\in V$, $T[\{t\in T:v\in X_t\}]$ is connected
\end{itemize}
The \emph{width} of a tree decomposition is $\max_{t\in T} |X_t| - 1$. The \emph{treewidth} of $G$ is the minimum width over all tree decompositions of $G$.
\end{definition}

The main result concerns graphs of bounded bandwidth. The proof of Theorem \ref{thm:rushhour} uses the notion of cutwidth.

\begin{definition}[Bandwidth \cite{bodl98}]
Let $G=(V,E)$ be a graph and define a one-to-one correspondence $f:V\to \{1,\ldots,|V|\}$. The \emph{bandwidth} of a graph is the minimum over all such correspondences of $\max_{(u,v)\in E} |f(u)-f(v)|$.
\end{definition}

\begin{definition}[Cutwidth \cite{bodl98}]
Let $G=(V,E)$ be a graph and define a one-to-one correspondence $f:V\to \{1,\ldots,|V|\}$. The \emph{cutwidth} of a graph is the minimum over all such correspondences of $\max_{w\in V} |\{(u,v)\in E|f(u)\leq f(w) < f(v)\}|$.
\end{definition}

\subsection{Dynamic Programming on Tree Decompositions}

For some of the positive results in this paper, we use the technique of dynamic programming on tree decompositions. For an excellent survey of and introduction to this topic, see \cite{bodl97}.

To simplify the algorithms, we assume tree decompositions are given in nice form, that is, all of the nodes of the tree decomposition are of one of the following types:

\begin{itemize}
\item \textbf{Leaf}: all leaves of the tree decomposition contain exactly one vertex.

\item \textbf{Introduce}: a node with one child that contains the same set of vertices as its child, with the addition of one new vertex.

\item \textbf{Forget}: a node with one child that contains the same set of vertices as its child, but with one vertex removed.

\item \textbf{Join}: a node with two children so that both the node itself and both of its children contain the exact same set of vertices.
\end{itemize}

Given a tree decomposition, it is possible to find a nice tree decomposition of the same width (and of linear size) in linear time.

\section{Parametrization by Treewidth}

\subsection{Constraint Graph Satisfiability}
We first examine \textsc{Constraint Graph Satisfiability (CGS)} with respect to treewidth $tw$ and maximum degree $\Delta$ as parameters. CGS is strongly $\NP$-complete even for graphs where $\Delta = 3$ \cite{hdbook}. We show that CGS is weakly $\NP$-complete for graphs of treewidth at most $2$, but that CGS is fixed parameter tractable with respect to $tw+\Delta$ and solvable in polynomial time for fixed $tw$ if the input is given in unary. Note that even though CGS becomes fixed parameter tractable with respect to $tw+\Delta$, its reconfiguration variant remains hard (Theorem \ref{thm:c2cnclhard}) even when this parameter is bounded by a constant.

\begin{theorem}\label{thm:cgshard}
\textsc{Constraint Graph Satisfiability} is weakly $\NP$-complete on constraint graphs of treewidth $2$.
\end{theorem}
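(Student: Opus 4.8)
\subparagraph*{Proof proposal.}
Membership in $\NP$ is immediate: a legal configuration is a succinct certificate whose legality is checkable in polynomial time (indeed, CGS is already known to be in $\NP$). The content of the theorem is therefore the hardness, which I would establish by a polynomial-time reduction from \textsc{Subset Sum}, a standard weakly $\NP$-complete problem. Recall that an instance of \textsc{Subset Sum} consists of positive integers $a_1,\dots,a_n$ (encoded in binary) and a target $t$, and asks whether some subset $I\subseteq\{1,\dots,n\}$ satisfies $\sum_{i\in I}a_i=t$. Writing $T=\sum_{i=1}^n a_i$, I may assume $0\le t\le T$.

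The plan is to build a constraint graph $G$ whose underlying graph is (a subgraph of) $K_{2,n}$. I introduce two \emph{collector} vertices $P$ and $Q$ with minimum inflows $t$ and $T-t$ respectively, and for each item $i$ an \emph{item vertex} $v_i$ with minimum inflow $a_i$; for each $i$ I add two edges $\{P,v_i\}$ and $\{v_i,Q\}$, both of weight $a_i$. The intuition is that the orientation of item $i$'s two edges encodes whether $a_i$ is ``sent to $P$'', ``sent to $Q$'', or absorbed by $v_i$.

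For correctness, first note that satisfying $v_i$ (two incident weight-$a_i$ edges, minimum inflow $a_i$) forbids exactly the configuration in which both its edges point away from it; in every allowed configuration item $i$ contributes its weight $a_i$ to at most one of $P,Q$. Hence the total inflow received by $P$ and $Q$ together is at most $T$. Since a legal configuration additionally requires inflow $\ge t$ at $P$ and inflow $\ge T-t$ at $Q$, whose sum is exactly $T$, both inequalities must be tight and no item may be absorbed by its $v_i$. Therefore a legal configuration exists if and only if the items sent to $P$ form a subset summing to exactly $t$, i.e.\ iff the \textsc{Subset Sum} instance is a yes-instance. Finally, the underlying graph is $K_{2,n}$, which has treewidth $2$: the bags $\{P,Q,v_i\}$, arranged in a path, form a width-$2$ tree decomposition, and $C_4\subseteq K_{2,n}$ rules out treewidth $1$.

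The main obstacle, and the only place requiring care, is the design of the item gadget: a naive construction would place a single weight-$a_i$ edge directly between $P$ and $Q$, which both creates parallel edges and, more importantly, would let an item contribute to neither collector without penalty, breaking the lower bound at one side. Routing each item through its own vertex $v_i$ with minimum inflow $a_i$ simultaneously keeps the graph simple (so that the treewidth is genuinely $2$, not an artefact of multi-edges) and enforces the ``at most one side, and in a tight configuration exactly one side'' behaviour that makes the exact-sum argument go through. Note that the degrees of $P$ and $Q$ are $n$: this large degree is necessary, since the later results show CGS is fixed-parameter tractable in $tw+\Delta$, so weak $\NP$-hardness at treewidth $2$ cannot be achieved with bounded degree.
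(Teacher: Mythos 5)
Your proposal is correct and is essentially the paper's own proof: the paper reduces from \textsc{Partition} (the special case of your \textsc{Subset Sum} reduction with $t=T/2$), using the identical $K_{2,n}$-shaped construction with two collector vertices and item vertices $v_i$ of minimum inflow equal to their weight, connected by two weight-$a_i$ edges each. Your explicit tightness argument (inflows at the collectors summing to exactly $T$) is if anything slightly more careful than the paper's phrasing, but the approach is the same.
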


\begin{proof}
By reduction from \textsc{Partition}. Let $x_1,\ldots,x_n\in \mathbb{N}$ such that $\Sigma_{i=1}^n x_i = 2N$ be an instance of \textsc{Partition}. Construct a constraint graph with vertices $U,W$ with minimum inflow $N$, and vertices $v_1\ldots,v_n$ where the minimum inflow of $v_i$ is equal to $x_i$. For $1\leq i\leq n$, create edges $(U,v_i)$, $(W,v_i)$, both of weight $x_i$.
In any legal configuration for this constraint graph, either edge $(U,v_i)$ or $(W,v_i)$ must be directed towards vertex $v_i$ and the other edge will be directed towards either $U$ or $W$. A solution to the partition problem and a legal configuration for the constraint graph correspond as follows: for all $x_i$ in one half of the partition, the edge $(U,v_i)$ is directed towards $U$ (and the edge $(W,v_i)$ is directed towards $v_i$) and for all $x_i$ in the other half of the partition, the edge $(W,v_i)$ is directed towards $W$ (and the edge $(U,v_i)$ is directed towards $v_i$).
The graph restricted to vertices $\{v_1,\ldots,v_n,U\}$ forms a tree, and thus the constraint graph itself has treewidth 2 (we can add the vertex $W$ to all bags). We have thus shown CGS weakly $\NP$-hard on constraint graphs of treewidth $2$. 
\end{proof}

The following theorem shows that CGS can be solved in polynomial time on graphs of treewidth $2$ when the input is given in unary (and completes the proof of theorem \ref{thm:cgshard}).

\begin{theorem}\label{thm:cgseasy}
\textsc{Constraint Graph Satisfiability} is fixed parameter tractable with respect to parameter $tw+\Delta$ and solvable in time $O^*(|x|^{tw})$ on constraint graphs of treewidth $tw$ when the size of the input numbers given in unary is $|x|$.
\end{theorem}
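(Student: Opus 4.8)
The plan is to solve \CGS by bottom-up dynamic programming on a nice tree decomposition of the constraint graph, of width $tw$. The essential observation is that for each vertex $v$ with minimum inflow $w(v)$, a legal configuration only cares whether the total inward weight reaches $w(v)$; any surplus is irrelevant. Hence, for a node $t$ with bag $X_t$, I would define a table indexed by a function $\mathbf{c}$ assigning to every $v\in X_t$ a \emph{capped accumulated inflow} $c_v\in\{0,1,\ldots,w(v)\}$, and let $A_t[\mathbf{c}]$ be true exactly when there is an orientation of all edges already processed in the subtree rooted at $t$ such that (i) every vertex that has already been forgotten satisfies its inflow constraint, and (ii) for each $v\in X_t$ the total inward weight received so far, capped at $w(v)$, equals $c_v$. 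To make ``processed'' precise and to ensure every edge contributes exactly once, I would first augment the decomposition with edge-introduce nodes, so that each edge of $G$ is introduced at a single node whose bag contains both its endpoints; this is a standard modification that preserves width and linear size.

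The transitions are then routine. At an edge-introduce node for $e=(u,v)$ of weight $w_e$, I branch on the two orientations of $e$: orienting $e$ into $u$ takes a child state with inflow $c'_u$ to one with $\min(c'_u+w_e,w(u))$ (and symmetrically for $v$), so $A_t[\mathbf{c}]$ is the disjunction over both choices of the corresponding preimage entries. At a vertex-introduce node the new vertex starts with inflow $0$ and the remaining entries are copied. At a forget node that removes $v$, all edges incident to $v$ have by now been introduced, so its inflow is final; I keep only states in which $v$ is satisfied, setting $A_t[\mathbf{c}]=A_{t'}[\mathbf{c},\,c_v=w(v)]$. At a join node the edge sets of the two subtrees are disjoint, so inflows add: $A_t[\mathbf{c}]$ is true iff there are child states $\mathbf{a},\mathbf{b}$ with $\min(a_v+b_v,w(v))=c_v$ for every $v$ and both $A_{t_1}[\mathbf{a}],A_{t_2}[\mathbf{b}]$ true. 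Forcing the root bag to be empty (by appending forget nodes) guarantees that every vertex is eventually checked, so the instance is satisfiable iff the unique root entry is true.

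For the running time I would bound the number of table entries per bag. In general each coordinate ranges over $\{0,\ldots,w(v)\}$, giving at most $(|x|+1)^{|X_t|}\le(|x|+1)^{tw+1}$ states when the weights are written in unary with total input size $|x|$; since there are $O(n)$ nodes and the $O^*$ notation absorbs the single extra factor of $|x|$ together with the per-node work, this yields the claimed $O^*(|x|^{tw})$ bound. For fixed-parameter tractability in $tw+\Delta$ I would instead observe that the accumulated inflow of $v$ is always a sum of a subset of the at most $\deg(v)\le\Delta$ weights incident to $v$, so it attains at most $2^{\Delta}$ distinct values regardless of their magnitude; restricting each coordinate to these reachable values bounds the table size by $2^{\Delta(tw+1)}$ and makes the whole algorithm run in $2^{O(\Delta\cdot tw)}\cdot\mathrm{poly}(n)$ time, which is $\FPT$.

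The routine steps above hide two points that need care, and these are where I expect the real work to lie. First, the join transition is a capped additive convolution over $tw+1$ coordinates; computed naively it costs roughly the square of the number of states, and one must check that this cost is still absorbed by $O^*$ (or sped up coordinate-by-coordinate) so as not to inflate the exponent beyond $|x|^{tw}$. Second, the correctness proof hinges on the invariant that when a vertex is forgotten all of its incident edges have already been processed, and on the disjointness of the edge sets feeding the two branches of a join; establishing these cleanly is exactly what the edge-introduce augmentation buys, and it is the step I would justify most carefully.
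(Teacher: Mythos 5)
Your proposal is correct and its core is the same as the paper's: bottom-up dynamic programming on a nice tree decomposition, with inflow-valued tables of size roughly $|x|^{tw+1}$ per bag for the unary claim. The executions differ in two ways worth noting. First, the paper actually gives two separate algorithms: for the $tw+\Delta$ parametrization it stores, for each bag, the orientations of all edges incident to bag vertices (at most $2^{\Delta(tw+1)}$ states), so that a join is a plain intersection --- a state is valid iff the same orientation vector is valid in both children --- and no convolution ever arises; only for the unary case does it switch to inflow states, deferring the inflow check to the forget node exactly as you do. You instead run a single capped-inflow DP and obtain the $\FPT$ bound by restricting each coordinate to the at most $2^{\Delta}$ reachable subset sums of incident weights; this is equally valid, and even a naive quadratic join keeps that case fixed parameter tractable. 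Second, your edge-introduce augmentation is more careful than the paper on a genuine subtlety: the paper defines child tables via the subgraph induced by the subtree, under which an edge between two bag vertices belongs to \emph{both} children of a join, and it never specifies how a join is performed in the inflow variant; your disjointness invariant is exactly what is needed to make inflows additive there. The one quantitative gap --- which you flag honestly, and which the paper's own sketch shares --- is the join cost in the unary case: a naive pairwise join costs $|x|^{2(tw+1)}$ per node, giving $O^*(|x|^{2tw})$ rather than the stated $O^*(|x|^{tw})$. That weaker bound still yields pseudopolynomiality for fixed $tw$ (all that the completeness claim of Theorem \ref{thm:cgshard} requires) and does not affect the $\FPT$ claim; to recover the stated exponent one can exploit that the weights of the bag vertices sum to at most $|x|$, so that $\prod_{v\in X_t}\bigl(w(v)+1\bigr)$ is bounded by roughly $\bigl(|x|/(tw+1)+1\bigr)^{tw+1}$, and perform the capped join as a multidimensional convolution in time quasi-linear in that product.
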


\begin{proof}
We provide a dynamic programming algorithm on tree decompositions. In the case where $\Delta$ is a parameter, in each bag we store for every combination of orientations of the edges incident to vertices in that bag (at most $2^{\Delta (tw + 1)}$ combinations) whether that combination leads to a valid configuration for the subgraph induced by the subtree rooted at that bag. We assume that the tree decomposition is given as a nice tree decomposition:

\begin{itemize}
\item \textbf{Leaf}: In a leaf node, we enumerate all possible combinations of orientations for edges incident to this vertex (at most $2^\Delta$), and see which combinations satisfy the minimum inflow of that vertex.

\item \textbf{Introduce}: In an introduce node, enumerate all combinations of orientations of edges incident to the vertices in the bag (at most $2^{\Delta (tw + 1)}$). A combination is valid if it satisfies the inflow of the vertex being introduced, and (when restricted to the appropriate edges) is also a valid assignment of orientations for the child bag.

\item \textbf{Forget}: In a forget node, an assignment of orientations is valid if it can be extended (by picking appropriate orientations for the missing edges) to a valid configuration for the child bag.

\item \textbf{Join}: In a join node, an assignment of orientations is valid if it is valid for both child nodes.
\end{itemize}

The amount of work done in each bag is $O^*(2^{\Delta (tw + 1)})$, so the total running time is $O^*(|I| \cdot 2^{\Delta (tw + 1)})$, where $|I|$ is the number of bags (which can be assumed to be linear in the size of the input graph, since there always a linear size tree decomposition).

A similar algorithm can be used in the case where the input is given in unary. Instead of storing the orientations of the edges directly, we store only the resulting inflow values. For all possible combinations of inflow that the vertices in the bag can be receiving (at most $|x|^{tw+1}$ combinations), we store whether that combination of inflows is attainable using only the edges present in the subgraph induced by the subtree rooted in that bag (i.e. edges to vertices in parent bags are not considered) in configurations that are legal for the vertices in the induced subgraph excluding the vertices in the bag itself. This requires (in particular) modification of the forget case, marking as invalid any inflow configurations that do not satisfy the minimum inflow of the vertex being forgotten (detection of whether a vertex is receiving sufficient inflow is deferred to the forget node instead of being checked in the leaf or introduce cases). 
\end{proof}

\subsection{Unbounded Nondeterministic Constraint Logic}

In this section we prove the main result, namely that restricted \textsc{\normalsize Nondeterministic Constraint Logic} remains $\PSPACE$-complete even when restricted to graphs of bounded bandwidth (which is a subclass of graphs of bounded treewidth).

To show $\PSPACE$-completeness, we reduce from \textsc{\normalsize $H$-Word Reconfiguration}. The bandwidth of the constraint graph created in the reduction will depend only on the size of $H$. Since \textsc{\normalsize $H$-Word Reconfiguration} is $\PSPACE$-complete for a fixed (finite) $H$, we obtain a constant bound on the bandwidth.

\begin{theorem}\label{thm:c2cnclhard}
There exists a constant $c$, such that C2C \textsc{\normalsize Nondeterministic Constraint Logic} is $\PSPACE$-complete on planar constraint graphs of bandwidth at most $c$ that consist of only AND and OR vertices.
\end{theorem}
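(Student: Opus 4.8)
The plan is to establish membership in \PSPACE\ and then prove hardness by reduction from \textsc{$H$-Word Reconfiguration} for the fixed finite $H=(\Sigma,E)$ guaranteed by Theorem~\ref{thm:hwordhard}. Membership is immediate, since these instances are a special case of general C2C \NCL, which is already \PSPACE-complete. The entire effort therefore goes into the hardness reduction, and the governing design constraint is that the constructed constraint graph must be laid out linearly, so that its bandwidth depends only on $|H|$ and not on the length $n$ of the input words; because $H$ is a fixed finite object this yields the desired absolute constant $c$.

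Given words $W_s,W_g$ of length $n$, I would build one constant-size \emph{character gadget} $G_i$ per position $i$, arranged left to right. Each $G_i$ carries, for every symbol $\sigma\in\Sigma$, a \emph{selection} edge whose orientation records whether $\sigma$ is currently chosen at position $i$, with an OR-type substructure enforcing that at least one symbol is selected. Between consecutive positions I would insert a constant-size \emph{constraint gadget} $T_{i,i+1}$ that reads the selection edges of $G_i$ and $G_{i+1}$ and, built from AND/OR logic, admits a legal configuration if and only if every selected pair of symbols lies in $E$. The intended invariant is that in every legal configuration each position has a nonempty set of selected symbols, all pairwise $E$-compatible with the selected symbols of its neighbors. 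A single $H$-word move changing $c_i$ from $\sigma$ to $\sigma'$ is then simulated by first selecting $\sigma'$ (possible exactly when $\sigma'$ is $E$-compatible with the fixed neighboring symbols, which is precisely the condition that the resulting word is an $H$-word) and afterwards deselecting $\sigma$, the latter being always safe once another compatible symbol is selected.

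Each gadget has size $O(|\Sigma|+|E|)$, a constant, and $G_i$ is wired only to $T_{i-1,i}$ and $T_{i,i+1}$; hence in the left-to-right vertex ordering every edge joins vertices lying within a constant number of positions of one another, giving bandwidth at most some $c=c(|H|)$. To obtain planarity I would route the two copies of each selection signal (one to the left constraint gadget, one to the right) using Hearn and Demaine's crossover gadget (Figure~\ref{fig:crossover}); since only a constant number of crossings occur per position, planarity costs only a constant factor in gadget size and therefore preserves the bandwidth bound. Finally, to meet the requirement that the graph use only AND and OR vertices (which automatically forces $3$-regularity and weights in $\{1,2\}$), I would realize the crossovers via the half-crossover gadget together with red--blue conversion, terminate all loose edges with the constrained and free terminators, and massage the constraint gadgets into AND/OR form, all using the constant-overhead tools reproduced above.

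Setting $C_1$ and $C_2$ to be the configurations in which exactly the symbols of $W_s$ and $W_g$ are selected, the forward direction follows by simulating each $H$-word move as described. The hard part will be the converse: because NCL may reverse edges in any order and may interleave partial transitions at many positions simultaneously, I must show that any legal reconfiguration of the constraint graph projects onto a valid $H$-word reconfiguration with no ``cheating.'' The key is the compatibility invariant: every event that selects a new symbol is gated by $E$-compatibility with all currently selected neighbor symbols, so no legal configuration can ever witness an $E$-violating pair. I would then argue that the sequence of select/deselect events can be normalized---reordered so that at most one position is mid-transition at a time---and that reading off the ``clean'' configurations (exactly one selected symbol per position) yields a legal sequence of single-character $H$-word moves from $W_s$ to $W_g$. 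Establishing that this normalization is always possible, and that the constraint gadgets genuinely enforce the all-pairs-compatible invariant while remaining AND/OR-only and of constant size, is the technical heart of the proof and the step I expect to be the main obstacle.
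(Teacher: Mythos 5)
Your construction is essentially the paper's: the paper also reduces from \textsc{$H$-Word Reconfiguration} for a fixed $H$, encodes each position $i$ by a row of \emph{character vertices} $X_{i,j}$ ($j\in\Sigma$) attached to a degree-$|\Sigma|$ \emph{universal} OR-structure forcing at least one symbol to be selected, places between consecutive rows one OR \emph{relation vertex} $\Delta_{i,A,B}$ per forbidden pair $(A,B)\notin E$ (with red--blue conversions, edge splitting via AND vertices, and a constrained blue-edge terminator on the third edge) so that $A$ selected at position $i$ forbids $B$ at position $i+1$, simulates a word move by select-then-deselect, lays rows out linearly in constant-size ``bags'' for the bandwidth bound, and uses crossover gadgets for planarity. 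So the forward direction, the gadgetry, and the bandwidth argument all match.

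The genuine weak point is your plan for the converse. You propose to \emph{normalize} the NCL move sequence---reorder edge reversals so that at most one position is mid-transition at a time---and then read off ``clean'' configurations. Reordering moves in a reconfiguration sequence is not sound in general: legality of each intermediate configuration depends on the order of reversals, and an adversarial legal sequence can keep many positions multi-selected for arbitrarily long, so there is no obvious argument that such a normalization exists. The paper avoids reordering entirely. It observes that a configuration may \emph{encode} several words at once (any choice of one selected, pairwise-compatible symbol per position), and projects the \emph{given} sequence configuration-by-configuration: each single edge reversal changes the set of selected symbols at one position by at most one symbol, so one maintains a witness word by keeping it unchanged when a symbol becomes selected, and, when the symbol currently used at some position becomes deselected, swapping that character to any still-selected symbol there (nonempty by the universal-vertex constraint, and automatically $E$-compatible by your invariant). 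Requiring the goal configuration to encode \emph{only} $W_g$ then forces the projected sequence to end at $W_g$. Replacing your normalization step with this projection argument closes the gap; the rest of your outline goes through as written.
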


\begin{proof}
Let $H=(\Sigma,E)$ be so that \textsc{\normalsize $H$-Word Reconfiguration} is $\PSPACE$-complete. We provide a reduction from \textsc{\normalsize $H$-Word Reconfiguration} to (unrestricted) NCL and then show how to adapt this reduction to work for restricted NCL. Let $W_s,W_g$ be an instance of \textsc{\normalsize $H$-Word Reconfiguration} and let $n$ denote the length of $W_s$. In the following, all vertices are given minimum inflow 2.

We create a matrix of vertices $X_{i,j}$ for $i\in\{1,\ldots,n\}, j\in\Sigma$, the \emph{character vertices}. The orientation of edges incident to $X_{i,j}$ will correspond to whether the character at position $i$ in the word is character $j$. For every row $i$ of this matrix we create a \emph{universal vertex} $U_i$ and for every $j\in\Sigma$ we create a blue (weight 2) edge connecting $U_i$ and $X_{i,j}$.

In each row $i\in{1,\ldots,n-1}$ and for all pairs $(A,B)\not\in E$, we create a \emph{relation vertex} $\Delta_{i,A,B}$. We create a red (weight 1) edge connected to $X_{i,A}$ and pass it through a red-blue conversion gadget and connect the blue edge leaving the conversion gadget to $\Delta_{i,A,B}$. We mirror this construction, creating a red edge connected to $X_{i+1,B}$, converting it to blue, and connecting it to $\Delta_{i,A,B}$.

Finally, we connect one additional blue edge to $\Delta_{i,A,B}$ and connect it to a constrained blue-edge terminator. This edge serves no purpose (its inflow can never count towards $\Delta_{i,A,B}$) but to make $\Delta_{i,A,B}$ an OR vertex as claimed.

\begin{figure}[t]
\centering
%$\vdots$
 \resizebox{!}{7.5cm}{\begin{tikzpicture}[>=stealth',shorten >=1pt,auto,node distance=2.5cm,line width=2.5pt]
         \tikzstyle{every state}=[fill=gray!60,draw=black,thin,text=black,minimum width=4.0em]
         \tikzstyle{every node}=[fontscale=1.25]

  % top row universal vertex
  \node[state]         (U) []                  {$U_i$};

  %top row nodes
  \node[state]         (A) [below right=1.5cm of U] {$X_{i,A}$};
  \node[state]         (B) [right of=A] {$X_{i,B}$};
  \node[state,fill=gray!60]         (C) [right of=B] {$X_{i,C}$};
  \node[state]         (D) [right of=C] {$X_{i,D}$};
  
  %top row red arrow coords
  \coordinate[above = 1cm of A] (Aa) ;
  \coordinate[above = 1cm of B] (Ba) ;
  \coordinate[above = 1cm of C] (Ca) ;
  \coordinate[above = 1cm of D] (Da) ;
  
  \coordinate[below = 1cm of A] (Ab) ;
  \coordinate[below = 1cm of B] (Bb) ;
  \coordinate[below = 1cm of C] (Cb) ;
  \coordinate[below = 1cm of D] (Db) ;
  
  %bottom row nodes
  \node[state]         (A2) [below=4cm of A] {$X_{i+1,A}$};
  \node[state,fill=gray!60]         (B2) [right of=A2] {$X_{i+1,B}$};
  \node[state]         (C2) [right of=B2] {$X_{i+1,C}$};
  \node[state]         (D2) [right of=C2] {$X_{i+1,D}$};
  
  %bottom row red arrow coords
  \coordinate[above = 1cm of A2] (A2a) ;
  \coordinate[above = 1cm of B2] (B2a) ;
  \coordinate[above = 1cm of C2] (C2a) ;
  \coordinate[above = 1cm of D2] (D2a) ;
  
  \coordinate[below = 1cm of A2] (A2b) ;
  \coordinate[below = 1cm of B2] (B2b) ;
  \coordinate[below = 1cm of C2] (C2b) ;
  \coordinate[below = 1cm of D2] (D2b) ;  
  
  %bottom row universal vertex
  \node[state]         (U2) [below left=1.5cm of A2]                  {$U_{i+1}$};
  
  %constraint vertex
  \node[state]         (Co1) [between=A and B2] {$\Delta_{i,A,B}$};  
  \coordinate[right = 0.75cm of Co1] (Co1R) ;
  
  %\node[state]         (Co2) [between=A and D2] {};  
 
 % top universal edges
  \path[every edge, blue, bend left,->]
  (U) edge (A)
  (U) edge (B)
  (U) edge[-<-=0.5,-] (C)
  (U) edge (D);
  
  %bottom universal edges
  \path[every edge, blue, bend right, ->]
  (U2) edge (A2)
  (U2) edge[-<-=0.5,-] (B2)
  (U2) edge (C2)
  (U2) edge (D2);  
  
  % constraint path
  \path[every edge,blue]
   (A) edge[->-=0.6,red,thick] ($(A)!0.5!(Co1)$)
   ($(A)!0.5!(Co1)$) edge[->-=0.75] (Co1)
   (Co1) edge[->-=0.75] ($(Co1)!0.5!(B2)$)
   ($(Co1)!0.5!(B2)$) edge[->-=0.65,red,thick] (B2)
   (Co1) edge[->-=0.6] (Co1R) ;   
   %(Co1) edge[->-=0.5] (B2)  ;
   
  %\draw ($(A)!0.5!(Co1)$) circle (0.3cm)
    
   %edges up from top row
   \path[every edge, red,thick]
   (A) edge[->-=0.5] (Aa)
   (B) edge[->-=0.5] (Ba)
   (C) edge[-<-=0.5] (Ca)
   (D) edge[->-=0.5] (Da);
   
   %down up from top row
   \path[every edge, red,thick]
   (B) edge[->-=0.5] (Bb)
   (C) edge[-<-=0.5] (Cb)
   (D) edge[->-=0.5] (Db);
   
   %edges up from bottom row
   \path[every edge, red,thick]
   (A2) edge[->-=0.5] (A2a)
   %(B2) edge[-<-=0.5] (B2a)
   (C2) edge[->-=0.5] (C2a)
   (D2) edge[->-=0.5] (D2a);
   
   %down from bottom row
   \path[every edge, red,thick]
   (A2) edge[->-=0.5] (A2b)
   (B2) edge[-<-=0.5] (B2b)
   (C2) edge[->-=0.5] (C2b)
   (D2) edge[->-=0.5] (D2b);
  
  \end{tikzpicture}} 
     
     %$\vdots$
     \caption{Slice of two rows from the matrix of vertices (over an alphabet of $\Sigma=\{A,B,C,D\}$), and the construction for enforcing non-adjacency of $A$ and $B$.}
     \label{fig:twred}
\end{figure}
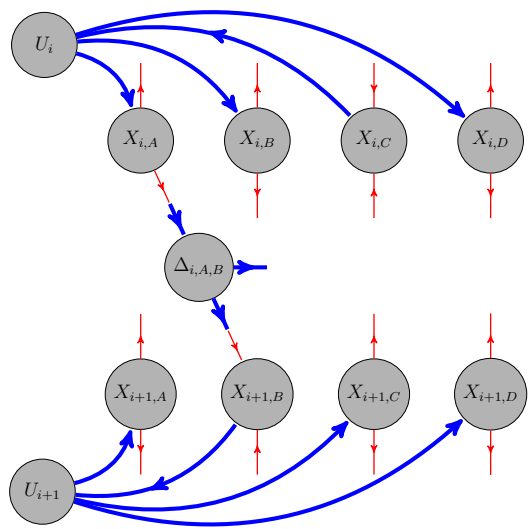

A single instance of this construction is shown in figure \ref{fig:twred}, which depicts a slice of two rows from the matrix of vertices. Having created an instance of this construction for $(A,B)$, we might need to create an instance for $(A,C)$. Rather than attaching another red edge to $X_{i,A}$, we instead split the existing red edge leaving $X_{i,A}$ by connecting it to a red-blue conversion gadget, and attaching the resulting blue edge to an AND vertex. The edges leaving the AND vertex may be oriented outward if and only if the red edge leaving $X_{i,A}$ is oriented into the AND vertex, effectively splitting it. We then convert the two red edges of the AND vertex to blue and attach them to $\Delta_{i,A,B}$ and $\Delta_{i,A,C}$ as before. To create further copies of this construction we can repeat the splitting process, so that $X_{i,A}$ eventually has two incident red edges, one connecting to gadgets in row $i-1$ and one connecting to row $i+1$ (the excess red edges in rows $1$ and $n$ may instead be connected to a free red edge terminator).

We define a character $j$ to be \emph{in} the word at position $i$ if the edge $(U_i,X_{i,j})$ is oriented towards $U_i$.

\begin{claim}
In any legal configuration, at least one character is in the word at each position.
\end{claim}

\begin{proof}
$U_i$ has minimum inflow 2, so at least one of its incident edges $(U_i,X_{i,j})$ must be oriented towards it and hence $j$ is in the word at position $i$. 
\end{proof}

\begin{claim}
In any legal configuration, if $(A,B)\not\in E$ then $A$ can be in the word at position $i$ only if $B$ is not in the word at position $i+1$.
\end{claim}

\begin{proof}
If $A$ is in the word at position $i$, then $X_{i,A}$ is not receiving inflow from $U_i$, so both of the red edges incident to $X_{i,A}$ most point in towards $X_{i,A}$. On the path from $X_{i,A}$ through $\Delta_{i,A,B}$ to $X_{i+1,B}$ we will first pass through a red-blue conversion gadget. Since the red edge is oriented out of the conversion gadget (and towards $X_{i,A}$), the blue edge must be oriented into the conversion gadget. We may then encounter several AND vertices (that are used for the ``splitting'' of red edges incident to $X_{i,A}$), note that since the blue edge is oriented towards the conversion gadget and away from the AND vertex, both red edges must be oriented into the AND vertex. This means that when we encounter another red-blue conversion gadget its blue edge must be oriented into the AND vertex and in turn the red edge incident to the conversion gadget has to be oriented into the conversion gadget. Ultimately, when we arrive at the vertex $\Delta_{i,A,B}$ we find that one of its edges must be oriented out of it (to point into the previously described AND vertices and help satisfy the minimum inflow of $X_{i,A}$).
Suppose by contradiction that $B$ is in the word at position $i+1$. By a similar argument, we find that the second edge incident to $\Delta_{i,A,B}$ also has to be oriented out of it. This is impossible, since the third blue edge incident to $\Delta_{i,A,B}$ is also oriented out of it (since that edge is attached to a constrained blue-edge terminator gadget) and thus its minimum inflow constraint is violated. Therefore $B$ can not be in the word at position $i+1$.
\end{proof}

Note that this claim implies that if for each position we pick \emph{some} character that is in the position at that word, we end up with a valid $H$-Word. We say that a configuration for the constraint graph \emph{encodes} a word $W$ if each of that word's characters is in the word at the appropriate position.

\begin{claim}
If a legal configuration for the constraint graph encoding $W_s$ may be reconfigured into a legal configuration encoding only $W_g$ (i.e. the configuration encodes no other word), then $W_s$ may be reconfigured into $W_g$.
\end{claim}

\begin{proof}
Suppose we have a reconfiguration sequence of legal configurations $C_1,\ldots,C_m$ so that $C_1$ encodes $W_s$ and $C_m$ encodes only $W_g$. We define a reconfiguration sequence of words $W_1,\ldots,W_m$ which has the property that $C_i$ encodes $W_i$. Let $W_1=W_s$, we recursively define $W_i, 1<i\leq m$ as follows: Since $C_{i+1}$ differs from $C_i$ in the orientation of only one edge, the set of words encoded by $C_{i+1}$ differs only from the set of words encoded by $C_i$ by making one character at a position allowed (in the word) or disallowed (was in the word in $C_i$ but not in the word in $C_{i+1}$). If a character at a position becomes allowed in $C_{i+1}$, let $W_{i+1}=W_i$ (since $C_i$ encodes $W_i$, $C_{i+1}$ also encodes $W_i$). If a character at a position becomes disallowed (i.e. is no longer in the word), we obtain $W_{i+1}$ from $W_i$ by changing the character at that position to some allowed character (of which there is at least one). Following these steps, since the final configuration encodes only $W_g$, we obtain a reconfiguration sequence from $W_s$ to $W_g$ as claimed. 
\end{proof}

Note that we may have multiple choices for $W_i$ because $C_i$ can encode more than one word. It suffices to simply pick one of the words it encodes, while ensuring it differs in at most one character from the previous and next.

\begin{claim}
Given a $H$-word $W$ of length $n$, there exists a legal configuration for the constraint graph encoding only $W$.
\end{claim}

\begin{proof}
Pick the orientation of edge $(U_i,X_{i,j})$ to be towards $U_i$ if the character in $W$ at position $i$ is $j$, and towards $X_{i,j}$ otherwise. Clearly this constraint graph encodes only $W$. This configuration can be extended to a legal configuration for the remaining (relation) vertices by noticing the following: if there is a relation vertex $\Delta_{i,A,B}$ then either $A$ is not in the word at position $i$ or $B$ is not in the word at position $j$. Suppose w.l.g. that $B$ is not in the word, then $X_{i+1,B}$ is receiving inflow from $U_{i+1}$ and hence its incident red edges may be pointing outwards, which (after passing through the red-blue conversion gadgets and splitting AND vertices as described before) allows us to satisfy the inflow requirement of $\Delta_{i,A,B}$. 
\end{proof}

\begin{claim}
If two $H$-Words $W_1,W_2$ differ by only one character, then a constraint graph $G$ encoding only $W_1$ can be reconfigured into one encoding only $W_2$.
\end{claim}

\begin{proof}
Suppose that $W_1$ and $W_2$ differ by changing the character at position $i$ from $A$ to $B$. We may first reconfigure $G$ from encoding only $W_1$ to encoding both $W_1$ and $W_2$, and then reconfigure it to encode only $W_2$. This temporarily increases the inflow $U_i$ receives from $2$ (receiving inflow only from $X_{i,A}$) to $4$ (receiving inflow from both $X_{i,A}$ and $X_{i,B}$) back to $2$ (receiving inflow from only $X_{i,A}$). It may be required to reorient some edges to satisfy the inflows of the relation vertices, but this is always possible because neither $A$ nor $B$ conflicts with any preceding or succeeding characters. 
\end{proof}

Note that this claim implies that if $W_s$ can be reconfigured into $W_g$, then a constraint graph encoding only $W_s$ can be reconfigured into one encoding only $W_g$. This completes the reduction.

All that remains to show is that this graph can be adapted so that it only uses AND and OR vertices and becomes planar, and that the resulting graph has bounded bandwidth. The only vertices that are not already AND or OR vertices are the universal vertices $U_i$. Note that taking a single OR vertex (that has 3 incident edges at least one of which has to be oriented inwards) we can attach another OR vertex to one of its edges to obtain a structure that has 4 external edges, at least one of which has to be oriented inwards. Repeating this procedure $n$ times we obtain a tree of OR vertices with $n+3$ external edges, at least one of which has to be oriented inwards, which can replace the universal vertex.

To make the graph planar, we can use Hearn and Demaine's crossover gadget \cite{hdbook}. While this may increase the graph's bandwidth, the following argument that it remains bounded by a constant:

We create a number of bags $B_1,\ldots B_{n-1}$ that are subsets of vertices, with the property that the size of each bag depends only on $H$ (which is fixed) and that edges connect only vertices that are both in the same bag, or a vertex in bag $B_i$ with a vertex in bag $B_{i+1}$. This is achieved by taking in bag $i$ the vertices $\{X_{i,j}:j\in \Sigma\},\{\Delta_{i,A,B}\:(A,B)\not\in E\},\{X_{i+1,j}:j\in \Sigma\}$, the vertices in gadgets between them (i.e. the red-blue conversion gadgets and AND vertices used in the splitting process) and the construction replacing the universal vertices $U_i$ and $U_{i+1}$. This shows the bandwidth of the graph is bounded by a constant $c$, since we can order the vertices so that a vertex in $B_i$ precedes a vertex in $B_{i+1}$ (and pick an arbitrary order for the vertices in the same bag).

We have thus shown how to reduce $H$-Word Reconfiguration to NCL, shown how to adapt our reduction to use only AND and OR vertices and make the resulting graph planar and that the resulting graph has bounded bandwidth and have thus shown Theorem \ref{thm:c2cnclhard}. 
\end{proof}

Theorem \ref{thm:c2cnclhard} also holds for C2E \textsc{\normalsize Nondeterministic Constraint Logic}:

\begin{theorem}\label{thm:c2enclhard}
There is a constant $c$, such that C2E \textsc{\normalsize Nondeterministic Constraint Logic} is $\PSPACE$-complete, even on planar constraint graphs of treewidth (bandwidth) at most $c$ that use only AND and OR vertices.
\end{theorem}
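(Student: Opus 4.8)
The plan is to re-use the constraint graph built in the proof of Theorem~\ref{thm:c2cnclhard} almost verbatim, converting the ``reach a configuration encoding $W_g$'' goal of the C2C instance into a single target edge for a C2E instance. Note that one cannot simply invoke the latch gadget here, since that construction reduces C2E \emph{to} C2C and hence runs in the wrong direction for transferring hardness; a direct argument is needed instead. Membership in \PSPACE{} is immediate: configurations have polynomial size and the single-edge transition relation is polynomial-time checkable, so reachability is decidable in nondeterministic polynomial space, which equals \PSPACE{} by Savitch's theorem. Thus only hardness requires attention.

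For hardness I would start from the same matrix of character vertices $X_{i,j}$, universal vertices $U_i$ and relation vertices $\Delta_{i,A,B}$, and take as the initial configuration $C_s$ a configuration encoding only $W_s$ (which exists by the penultimate claim). To this I would attach a \emph{detector} that can fire exactly when the current configuration encodes $W_g$. Recall that a configuration encodes $W_g$ precisely when for every position $i$ the character $W_g[i]$ is in the word at $i$, i.e.\ when each edge $(U_i,X_{i,W_g[i]})$ points towards $U_i$. The detector is an AND-chain of vertices $d_1,\ldots,d_n$: the vertex $d_n$ may be ``released'' only if $W_g[n]$ is in the word at position $n$, and $d_i$ may be released only if $d_{i+1}$ is released \emph{and} $W_g[i]$ is in the word at position $i$. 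Each per-position signal ``$W_g[i]$ is in the word'' is obtained by tapping an extra edge off $X_{i,W_g[i]}$ using exactly the red-edge-splitting construction already employed for the relation vertices, so that the detector only reads, and never forces, the orientation of the matrix edges. The target edge $e$ is the output edge of $d_1$, oriented in $C_s$ so that reversing it requires $d_1$ to be released.

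The correctness argument then reduces to the claims already established. If $W_s$ reconfigures to $W_g$, then by the final claim and the remark following it, the graph can be reconfigured from $C_s$ into a configuration encoding only $W_g$; in that stable configuration all $n$ per-position signals hold simultaneously, so the chain $d_n,d_{n-1},\ldots,d_1$ can be released in order and $e$ reversed. Conversely, if $e$ is ever reversed then at that moment $d_1$ is released, which forces every per-position signal to hold simultaneously, so the configuration encodes $W_g$; the claim relating legal-configuration sequences to word sequences (whose proof only needs the \emph{final} configuration to encode $W_g$, not to encode it exclusively) then yields a reconfiguration of $W_s$ into $W_g$. Hence the C2E instance is a ``yes''-instance if and only if the \textsc{\normalsize $H$-Word Reconfiguration} instance is. (If $W_s=W_g$ the instance is trivially positive and can be handled separately.)

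Finally I would verify that the added structure preserves the two geometric guarantees. The detector touches only position-$i$ vertices and the neighbours $d_{i\pm 1}$, so placing each $d_i$ in the bag $B_i$ (or $B_{i-1}$) of the earlier bandwidth argument keeps every new edge inside a bag or between consecutive bags, leaving the bandwidth bounded by a constant depending only on $H$. Planarity is restored exactly as before, by routing the detector edges alongside the linear bag structure and inserting crossover gadgets where they must cross existing edges; since the detector is a single chain running parallel to the matrix, only a bounded number of crossings per bag is introduced. The main obstacle, and the step requiring the most care, is the simultaneity in the converse direction: one must design the tap so that releasing the AND-chain does not itself force any matrix edge to move, guaranteeing that $d_1$ can be released only in a configuration that genuinely encodes $W_g$ rather than one merely reachable from it.
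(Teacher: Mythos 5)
Your overall strategy --- keeping Wrochna's original \textsc{$H$-Word Reconfiguration} instance and compiling the goal word $W_g$ into a detector gadget whose output is the C2E target edge --- is genuinely different from the paper's proof, which instead strengthens the \emph{source} problem (Wrochna's Turing machine is modified to write a designated symbol at the start of the tape upon accepting, so that \textsc{$H$-Word Reconfiguration} stays $\PSPACE$-complete when the goal is merely ``a given character appears at a given position''), after which the reduction of Theorem \ref{thm:c2cnclhard} works verbatim with the single matrix edge $(U_i,X_{i,j})$ as the target edge. Your route could in principle work, but as written it has a polarity error that breaks both directions of the correctness argument. In the construction of Theorem \ref{thm:c2cnclhard}, the red-edge-splitting tap off a character vertex $X_{i,j}$ certifies the \emph{negative} signal: a leaf of the splitting tree can deliver inflow to whatever reads it only when the red edge at $X_{i,j}$ points away from $X_{i,j}$, which in turn forces $X_{i,j}$ to take its inflow from $U_i$, i.e.\ forces $j$ to be \emph{not} in the word at position $i$. (This is exactly why the relation vertices $\Delta_{i,A,B}$ work: they can be satisfied precisely when at least one of the two tapped characters is absent.) Consequently the per-position signal your taps feed to $d_1,\ldots,d_n$ is ``$W_g[i]$ is \emph{not} in the word at position $i$'': your AND-chain releases exactly when \emph{none} of the required characters is in place, the complement of the intended condition, and activating a tap moreover \emph{forces} the corresponding character out of the word rather than merely reading it.

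The hole is fixable while staying within your framework, in two ways. First, keep the negative taps but invert the logic: for each position $i$ tap every character $j\neq W_g[i]$ and let the chain AND all of these ``absent'' signals over all positions. Since by the first claim at least one character is in the word at each position, the chain can release if and only if the configuration encodes \emph{only} $W_g$; this is precisely the hypothesis of the third claim, so the paper's claims apply unchanged (and you no longer need your parenthetical strengthening of that claim to final configurations that encode $W_g$ non-exclusively --- an assertion that happens to be true, via the product structure of the set of encoded words together with the second claim, but which you state without proof). Second, alternatively, build a genuinely positive tap by splitting the blue edge $(U_i,X_{i,j})$ through an AND vertex whose two red edges feed, after red-blue conversion, $U_i$ and the detector respectively; this does read ``$j$ is in the word at position $i$'', but it changes the definition of ``in the word'' and requires re-verifying all of the claims. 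Either repair keeps your bandwidth and planarity arguments intact, but some such repair is necessary: the proposal as submitted reduces from the wrong predicate.
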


\begin{proof}
\textsc{$H$-Word Reconfiguration} remains $\PSPACE$\hyp{}complete, even when instead of requiring that one $H$-word is reconfigured in to another, we ask whether it is possible to reconfigure a given $H$-word so that a given character appears at a specific position (without requiring anything regarding the remaining characters in the word). This can be seen by examining the proof in \cite{wrochna14}, noting that we may modify the Turing machine to, upon reaching an accepting state, move the head to the start of the tape and write a specific symbol there. The proof of Theorem \ref{thm:c2cnclhard} can then easily be adapted to work for C2E NCL (since a character appearing at a specific position corresponds to reversing a specific edge).
\end{proof}

We note that Theorems \ref{thm:c2cnclhard} and \ref{thm:c2enclhard} may be further strengthened by requiring that all OR vertices in the graph are \emph{protected}, i.e., in any legal configuration at least one of its incident edges is directed outwards. Hearn and Demaine \cite{hdbook} show how to construct an OR vertex using only AND vertices and protected OR vertices.

\subsection{Bounded Nondeterministic Constraint Logic}

We now consider the bounded variant of NCL, in which each edge is allowed to reverse at most once. We show that the bounded variants of NCL are weakly $\NP$-hard on graphs of treewidth at most $2$, but, unlike their unbounded variants, fixed parameter tractable when parameterized by $tw+\Delta$.

\begin{theorem}
Bounded NCL (both C2E and C2C) is weakly $\NP$-hard on constraint graphs of treewidth 2.
\end{theorem}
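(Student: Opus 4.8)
The plan is to reduce from \textsc{Partition}, reusing the constraint graph of Theorem~\ref{thm:cgshard} and adding a single edge that forces any bounded reconfiguration to pass through a \emph{balanced} intermediate configuration encoding the partition. I would give one construction that simultaneously handles C2E and C2C.

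Given $x_1,\dots,x_n$ with $\sum_i x_i = 2N$, I would create vertices $U,W$ of minimum inflow $N$ and vertices $v_1,\dots,v_n$ with $v_i$ of minimum inflow $x_i$. As in Theorem~\ref{thm:cgshard}, add edges $(U,v_i)$ and $(W,v_i)$ of weight $x_i$, and additionally a single \emph{crossing edge} $e^\ast=(U,W)$ of weight $N$. The start configuration $C_1$ would orient $e^\ast$ towards $U$, every $(U,v_i)$ towards $v_i$, and every $(W,v_i)$ towards $W$ (so $U$ receives $N$ from $e^\ast$, each $v_i$ receives $x_i$ from $U$, and $W$ receives $2N$); the target $C_2$ would be its mirror image (orient $e^\ast$ towards $W$, every $(W,v_i)$ towards $v_i$, every $(U,v_i)$ towards $U$). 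Both are legal irrespective of whether a partition exists, so the instance is polynomial-time constructible. For C2E I would take start $C_1$ with target edge $e^\ast$; for C2C I would take the pair $(C_1,C_2)$.

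The crucial observation is that $e^\ast$ can feed only one of $U,W$ at a time, so in the bounded setting it is reversed exactly once, say at step $\tau$. Immediately before $\tau$ the edge still points at $U$, so $W$ must draw its $N$ units entirely from $B=\{i:(W,v_i)\!\to\!W\}$, giving $\sum_{i\in B}x_i\ge N$; for the reversal to leave $U$ legal, $U$ must draw its $N$ units from $A=\{i:(U,v_i)\!\to\!U\}$, giving $\sum_{i\in A}x_i\ge N$. Since every $v_i$ is fed in a legal configuration, no index lies in both $A$ and $B$, so these sets are disjoint; as $\sum_i x_i=2N$ this forces $\sum_{i\in A}x_i=\sum_{i\in B}x_i=N$, i.e.\ a partition. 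Conversely, given a partition $S$ I would reconfigure in three phases: first, for each $i\in S$ reorient $(W,v_i)$ to point at $v_i$ and then $(U,v_i)$ to point at $U$, reaching the balanced configuration in which $A=S$, $B=S^{\mathsf c}$ and both $U$ and $W$ receive exactly $N$ units from the $v_i$; then reverse $e^\ast$; finally perform the analogous reorientations for $i\notin S$ to reach $C_2$. One would check that every intermediate configuration is legal and every edge is reversed at most once, so the same sequence witnesses both the C2E and the C2C instance. The graph is a triangular book on the apex edge $UW$, and placing $U,W$ in every bag of the star-shaped decomposition with bags $\{U,W,v_i\}$ shows its treewidth is $2$. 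As \textsc{Partition} is weakly $\NP$-hard and the weights are the only large quantities, this establishes weak $\NP$-hardness.

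The main obstacle will be designing the mutual exclusion correctly. A naive construction with two independent helper edges $a\!\to\!U$ and $b\!\to\!W$ fails: one may switch $b$ on before switching $a$ off, creating a slack window in which both $U$ and $W$ are over-supplied and the $v_i$ can be rearranged freely, so no partition is needed. Using the single crossing edge $e^\ast$ removes this loophole, since it cannot supply $U$ and $W$ simultaneously; the state in which neither endpoint is helped then becomes unavoidable precisely at the moment $e^\ast$ reverses, and that is where the balanced partition is forced. The second point requiring care is to keep both $C_1$ and $C_2$ legal \emph{without} presupposing a partition, which the extremal (all-$v_i$-on-one-side) choice above achieves.
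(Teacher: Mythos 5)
Your proposal is correct and is essentially the paper's own proof: the same reduction from \textsc{Partition} using the Theorem~\ref{thm:cgshard} graph plus a single weight-$N$ edge $(U,W)$, with the same forced-balance argument at the moment that edge reverses and the same reverse-everything-else completion for C2C (your construction is the paper's up to swapping the roles of $U$ and $W$). Your write-up of the forward direction via the disjoint sets $A$ and $B$ is in fact slightly cleaner than the paper's phrasing.
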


\begin{proof}
By reduction from \textsc{Partition}. Construct an NCL graph in the same manner as in the proof of Theorem \ref{thm:cgshard}, but create an additional edge $(U,W)$ of weight $N$. In the initial configuration, let the edge $(U,W)$ be directed towards $W$, the edges $(U,v_i)$ towards $U$ and the edges $(W,v_i)$ towards $v_i$.

\begin{itemize}
	\item For the C2E variant, ask whether it is possible to reverse the edge $(U,W)$.
	\item For the C2C variant, ask whether the configuration in which all edges are reversed is reachable.
\end{itemize}

It is possible to reverse $(U,W)$ is and only if there is a partition: initially, vertex $U$ is receiving $2N$ units of inflow (from the edges $(W,v_i)$) and $V$ is receiving $N$ units of inflow (from the edge $(U,W)$). To reverse $(U,W)$, we must reverse some of the edges $(U,v_i)$ and $(W,v_i)$ so that $W$ is receiving exactly $N$ units of inflow from them. If $W$ is receiving less than $N$ units of inflow from these edges it is not possible to reverse $(U,W)$, if it is receiving more than $N$ inflow then $V$ would not be receiving enough inflow. Being able to select these edges so that $W$ receives exactly $N$ units of inflow from them corresponds to a partition.

This completes the proof for the C2E variant. For the C2C variant, note that after reversing $(U,W)$ we may reverse the remaining edges (that have not been reversed yet) while the graph remains satisfied: after reversing $(U,W)$, $U$ is receiving $2N$ units of inflow and $W$ is receiving $N$ units, after reversing the remaining edges $U$ will be receiving $N$ units of inflow and $W$ will be receiving $2N$ units. 
\end{proof}

Bounded NCL (on graphs of treewidth $2$) is clearly in $\NP$, however, it is not clear whether there exists a pseudopolynomial algorithm for the problem, so we can not claim weak $\NP$-completeness. This is an open problem.

\begin{theorem}\label{thm:bncl-easy}
Bounded NCL (both C2E and C2C) is fixed parameter tractable with respect to parameter $tw+\Delta$.
\end{theorem}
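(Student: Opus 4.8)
The plan is to give a dynamic programming algorithm on a nice tree decomposition, analogous to the one used in the proof of Theorem~\ref{thm:cgseasy}, but now tracking the \emph{dynamics} of a bounded reconfiguration sequence rather than a single static configuration. The key observation making bounded NCL tractable is that since each edge reverses at most once, the relevant information about an edge is not just its final orientation but the \emph{time} (relative order) at which it flips. For a bag containing at most $tw+1$ vertices, each incident to at most $\Delta$ edges, there are at most $\Delta(tw+1)$ relevant edges; for each such edge we record whether it reverses at all and, if so, its position in the global ordering of reversals restricted to these edges. First I would formalize a \emph{signature} for a bag that captures, for every edge incident to a vertex in the bag, its start orientation, end orientation, and its relative flip time among the bag's edges.

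The main subtlety is that reversal times are global, so I cannot simply store local orderings independently in each subtree and hope they combine; a join node must reconcile two orderings that share the common edges of the shared bag. To handle this I would store, in the DP table entry for a node, the set of achievable \emph{relative orderings} (a total preorder) of the reversals of the edges that are incident to the bag, together with the guarantee that within the subgraph induced by the subtree there is a valid bounded reconfiguration sequence consistent with that ordering in which every vertex's minimum inflow is respected at every step. Crucially, only edges touching a bag vertex can interact across the boundary, and an edge that has already been forgotten never needs to be reconsidered because it reverses at most once; thus the interface between a subtree and the rest of the graph is fully described by the bag edges and their relative flip order. The number of such signatures is bounded by a function of $tw$ and $\Delta$ alone (roughly $(\Delta(tw+1))! \cdot 2^{\Delta(tw+1)}$ orderings-with-directions), which is a constant for fixed $tw+\Delta$, giving the desired fixed-parameter running time.

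The transitions I would then spell out node by node. At a \textbf{leaf} I enumerate which incident edges flip and in what order, discarding any ordering under which the single vertex's inflow drops below its minimum at some intermediate step. At an \textbf{introduce} node I extend each child signature by the new vertex's edges, interleaving their flip times into the existing order in all consistent ways and checking that the introduced vertex stays satisfied throughout. At a \textbf{forget} node I project away the forgotten vertex's edges, but I must verify that, for \emph{every} position in the ordering, the forgotten vertex's inflow constraint held; since after forgetting no further edges incident to it can flip (it no longer appears in any bag, by the connectivity property of tree decompositions), this is a sound final check. At a \textbf{join} node I take pairs of child signatures that agree on the orientations and relative flip order of the shared bag edges, and I merge them by forming the common refinement of the two orderings restricted to the shared edges; the inflow of each shared vertex is the sum of inflows along the two subtrees minus double-counting, which I verify remains above the minimum at every merged time step.

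The hard part will be the join node, specifically arguing that two subtree reconfiguration sequences that agree on the boundary signature can be \emph{simultaneously realized} in a single global sequence. The difficulty is that each subtree's sequence reverses edges interior to that subtree at times interleaved with the boundary edges, and these interior reversals from the two sides must be woven into one global timeline; the correctness argument is that because the only shared edges are the bag edges and these have an agreed-upon relative order, the interior reversals of the two subtrees are independent and can be scheduled in any interleaving that respects each subtree's internal order, without affecting the other subtree's inflow constraints. I expect the bulk of the effort to lie in making this interleaving argument rigorous and in confirming that the ``at most once'' restriction is exactly what prevents the combinatorial explosion that makes the unbounded variant $\PSPACE$-complete (Theorem~\ref{thm:c2cnclhard}).
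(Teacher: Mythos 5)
Your proposal is correct and takes essentially the same approach as the paper's proof: dynamic programming over a nice tree decomposition whose table entries are the valid relative reversal orderings of the edges incident to bag vertices (at most $(\Delta\, tw)!$-ish many, a function of $tw+\Delta$), with the same leaf/introduce/forget/join transitions. Your explicit treatment of the join-node interleaving argument is in fact more careful than the paper's, which simply asserts that a reversal sequence is valid at a join if and only if it is valid for both children.
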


\begin{proof}
The problems can be solved by dynamic programming on tree decompositions. For every bag, we consider all possible sequences of reversing the edges in that bag (since each edge may be reversed at most once, there are at most $(\Delta tw)!$ such sequences) that result in a valid solution for the subgraph induced by the subtree rooted at that bag. For C2C, we additionally require that a reversal sequence reverses exactly those edges that need to be reversed. For C2E, we additionally require that the reversal sequence reverses the target edge. We assume that the tree decomposition is given as a nice tree decomposition:

\begin{itemize}
\item \textbf{Leaf}: Enumerate all reversal sequences of incident edges (at most $\Delta!$). For the C2C variant, a sequence is valid if it reverses exactly the edges that must be reversed and additionally respects the minimum inflow of the vertex. For C2E, all sequences that respect the minimum inflow of the vertex are valid, unless the vertex is incident to the target edge in which case we additionally require that a valid reversal sequence reverses the target edge.

\item \textbf{Introduce}: Enumerate all reversal sequences of incident edges (at most $(\Delta tw)!$ sequences). A sequence is valid if it is valid for the vertex being introduced (as described in the leaf case) and also valid for the child bag (when restricted to the appropriate edges). 

\item \textbf{Forget}: A reversal sequence is valid if and only if it can be extended (by introducing reversal operations for the missing edges) to a valid reversal sequence for the child bag. Note that this is simply a projection of the reversal sequences valid for the child.

\item \textbf{Join}: A reversal sequence is valid if and only if it is valid for both child bags.
\end{itemize}

Since checking whether a reversal sequence is valid can be done in polynomial time (if the memoization tables are stored in a suitable data structure), the amount of work done in each bag is $O^*((\Delta tw)!)$. Since the number of bags can be assumed to be linear \cite{bodl97}, this gives a fixed-parameter tractable algorithm. 
\end{proof}

\section{Parametrization by Solution Length}

We consider parametrization by the length of the reconfiguration sequence $l$. This parameter does not bound the input size, but rather bounds the set of potential solutions considered. It asks not whether a reconfiguration sequence exists, but whether a reconfiguration sequence exists that has length (number of edge reversals) at most $l$. This parametrization does not apply to CGS (since it does not have such a notion), we thus consider this parametrization only for C2E and C2C NCL and their respective bounded variants.

\begin{theorem}\label{thm:lenhard}
(Unbounded) C2E, C2C and bounded C2E \textsc{Nondeterministic Constraint Logic} are $\W[1]$-hard with respect to $l$.
\end{theorem}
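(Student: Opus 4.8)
The plan is to establish $\W[1]$-hardness by a parameterized reduction from a known $\W[1]$-hard problem whose natural parameter translates into the number of edge reversals. The most natural source is a problem where we must perform exactly $k$ "moves" of some kind, with $k$ as the parameter — for instance \textsc{Multicolored Clique} or a problem directly about selecting $k$ items, since the reconfiguration length $l$ bounds how many edges may be flipped. First I would fix a target: reduce from a $\W[1]$-hard problem parameterized by $k$ (e.g. a variant of \textsc{Independent Set} or \textsc{Clique}) and arrange that reconfiguring $C_1$ to $C_2$ (or reversing a target edge $e$) requires flipping a number of edges that is a function only of $k$, so that setting $l = g(k)$ for some computable $g$ preserves the parameter bound.

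The core construction would use a "selection" gadget: build a constraint graph in which there are $n$ candidate edges (one per vertex of the source instance), and reversing the designated target edge $e$ is possible only after a sequence of forced flips that "selects" $k$ of these candidates in a way corresponding to a solution. Concretely, I would chain AND/OR vertices so that reversing $e$ requires first reorienting a cascade whose length is controlled: each selection of a candidate costs a bounded number of flips, and the total must stay within $l$. The key point is that the reconfiguration length must be \emph{small} (a function of $k$ only) even though the graph is large — this is what makes the parameter $l$ meaningful rather than trivially bounded by the graph size. I would design the gadgets so that a valid short reconfiguration sequence exists if and only if the source instance has a solution of size $k$, and any such sequence must flip exactly the edges encoding that solution plus a bounded amount of overhead.

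\textbf{The main obstacle} I anticipate is controlling the reconfiguration length tightly. In NCL, to flip one edge one often must first flip several others to maintain the inflow constraints at intermediate vertices, and these "enabling" flips can cascade. The danger is that reversing a single selection edge forces a chain of reorientations whose length grows with $n$ rather than with $k$, which would make $l$ depend on the input size and destroy the parameterized reduction. To overcome this I would engineer the gadgets so that the candidate edges can be flipped \emph{independently} — each in $O(1)$ moves without disturbing the others — by using free-standing AND vertices with slack (via free red-edge terminators, as in Figure~\ref{fig:red-term}) so that the local inflow is already satisfied and flipping a candidate edge does not cascade. Verifying that constraint encoding the source instance's solution condition (e.g. that the $k$ selected vertices form an independent set / clique) can itself be checked by reversing the target edge $e$, and that enabling this check costs only $O(k)$ further flips, is the delicate part.

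For the three variants, I would first carry out the argument for C2E (reversing the target edge $e$ is the natural stopping condition), then obtain C2C by attaching a latch gadget (Figure~\ref{fig:latch}) to $e$ as described earlier in the paper, so that reaching the goal configuration is equivalent to reversing $e$ and relocking, at the cost of only a constant number of additional flips. Finally, since the constructed sequence reverses each edge at most once by design (selections and the enabling cascade are one-directional), the same construction yields $\W[1]$-hardness for bounded C2E as well. The bounded C2C case is deliberately omitted here, consistent with Table~\ref{tab:overview}, where bC2C parameterized by $l$ is $\FPT$ rather than hard.
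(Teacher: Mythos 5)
Your overall architecture matches the paper's: a parameterized reduction from \textsc{$k$-Clique}, carried out first for C2E, transferred to C2C by attaching a latch gadget to the target edge, extended to bounded C2E by noting that the constructed sequence reverses each edge at most once, and with bounded C2C deliberately excluded. You also correctly isolate the central difficulty: the number of reversals must be a function of $k$ alone, so enabling cascades whose length grows with $n$ are fatal.

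However, there is a genuine gap at exactly that point: you never say \emph{how} the selection of $k$ vertices and the verification of $k(k-1)/2$ adjacencies are counted within $g(k)$ flips, and the one concrete mechanism you name --- chaining AND/OR vertices --- cannot work. A checker built from AND/OR vertices (or any bounded-degree, bounded-weight vertices) that must respond to $n$ candidate edges necessarily sits at distance growing with $n$ from most candidates, so the flips needed to propagate a selection to the checker make $l$ depend on $n$, not just $k$. This is not merely a technical nuisance: Theorem \ref{thm:LDfpt} shows the problem is $\FPT$ in $l+\Delta$, so any correct reduction \emph{must} use vertices of unbounded degree; a construction staying within the restricted AND/OR setting would imply $\FPT=\W[1]$. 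The paper's resolution is threshold counting with heavy vertices and polynomially large weights: a vertex $U$ with $n$ incident weight-$\Delta$ edges whose minimum inflow allows at most $k$ of them to be reversed (the selection); for each edge $(i,j)$ of $G$ a split vertex $v_{(i,j)}$ whose weight-$2$ edge into a vertex $V$ can be reversed only after both weight-$1$ edges $(v_i,v_{(i,j)})$, $(v_{(i,j)},v_j)$ are reversed, which in turn requires both $(U,v_i)$ and $(U,v_j)$ to have been reversed (the adjacency check); and $V$ with minimum inflow $k(k-1)$, so the target edge $(V,W)$ of weight $k(k-1)$ reverses if and only if $k(k-1)/2$ adjacency checks succeed, i.e., the $k$ selected vertices form a clique. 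All these flips are mutually independent, so the solution length is $O(k^2)$ with no cascades. Your free-red-edge-terminator slack addresses local feasibility but not this global counting; without an unbounded-degree counting device of this kind, the plan does not go through.
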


\begin{proof}
We show $\W[1]$-hardness by reduction from \textsc{$k$-Clique}, which is $\W[1]$-complete due to Downey and Fellows \cite{dfclique}:

Let $G=(V,E)$ be an instance of \textsc{$k$-Clique}. Let $n=|V|$ and let $\Delta$ be the maximum degree of $G$. We create a constraint graph with vertices $v_1,\ldots,v_n$ (copies of the original vertices from $G$) with minimum inflow equal to that vertex' degree in $G$, vertices $v_{(i,j)}$ with minimum inflow $2$ for all edges $(i,j)\in E$, and three additional vertices $U,V,W$ with minimum inflows $n-\Delta\cdot k$, $k(k-1)$ and $0$ respectively.

For every edge $(i,j)\in E$, two edges in the constraint graph are created: an edge $(v_i,v_{(i,j)})$ and an edge $(v_{(i,j)},v_j)$, both with weight $1$. Effectively, the edge $(i,j)$ is split and an additional vertex $v_{(i,j)}$ is added between vertices $v_i$ and $v_j$. In the initial configuration, both edges are oriented away from $v_{(i,j)}$ (and towards $v_i$ and $v_j$). Note that after adding these edges, the minimum inflow constraints in all vertices $v_1,\ldots,v_n$ are satisfied (since their minimum inflow is equal to their degree).

For every vertex $v_i$, add an edge $(U,v_i)$ to the graph with weight $\Delta$. The edge is oriented towards $U$ in the initial configuration, satisfying its inflow constraint.

For every edge $(i,j)\in E$, create an edge $(v_{(i,j)},V)$ of weight 2, oriented towards $v_{(i,j)}$ in the initial configuration. This satisfies the minimum inflow constraint of $v_{(i,j)}$.

Finally, create an edge $(V,W)$ of weight $k(k-1)$, oriented towards $V$ in the initial configuration, thus satisfying $V$'s minimum inflow. This edge is the target edge. Note that $W$'s minimum inflow is trivially satisfied.

The target edge $(V,W)$ can be reversed if and only if we have first reversed $\frac{k(k-1)}{2}$ of the weight $2$ edges to point in to $V$ (or else $V$'s inflow constraint would no longer be satisfied after reversing $(V,W)$. A weight $2$ edge $(v_{(i,j)},V)$ can be reversed if and only if we have first reversed the weight $1$ edges $(v_i,v_{(i,j)})$ and $(v_{(i,j)},v_j)$ (or else $v_{(i,j)}$'s minimum inflow would not remain satisfied). This in turn is possible if and only if both edges $(U,v_i)$ and $(U,v_j)$ have been reversed.

Because the minimum inflow of $U$ is $n-\Delta\cdot k$, we may reverse at most $k$ edges to point away from $U$. Since we may reverse the edge $(v_{(i,j)},V)$ if and only if both edges $(U,v_i)$ and $(U,v_j)$ have been reversed, the NCL instance thus constructed has a solution if and only if there are $k$ vertices in $G$ such that $G$ restricted to these vertices has $\frac{k(k-1)}{2}$ edges. This corresponds to these vertices forming a clique.

We have thus reduced \textsc{$k$-Clique} to C2E NCL. Note that if the instance created in the reduction has a solution of any length, then it has a solution of length at most $O(k^2)$ in which each edge reverses at most once. Hence the reduction is parameter-preserving, and works for both bounded and unbounded C2E NCL which we have thus shown $\W[1]$-hard.

To show $\W[1]$-hardness for C2C NCL, we replace the vertex $W$ with a latch gadget. The latch gadget has two states, and can change between these states if and only if the edge $(U,W)$ is pointing in to it. We ask whether there is a reconfiguration sequence that changes the state of the latch gadget, but where all the other edges are in their initial orientation. Such a sequence exists if and only if (in the original instance) it is possible to reverse the edge $(U,W)$, since we may first reverse $(U,W)$, activate the latch gadget, and then backtrack to return the rest of the graph to its original state. 
\end{proof}

We note that hardness is in the strong sense, since the weights of the edges and vertices are polynomial in the input size. However, the problems may become fixed parameter tractable if we include the maximum weight of an edge or vertex as parameter. This is an open problem.

The construction using the latch gadget does not work for bounded C2C NCL, since each edge can only be reversed once. Effectively, the hardness of NCL when parameterized by solution length is not due to the complexity of finding a reconfiguration sequence, but rather that of finding a suitable goal configuration. This is captured by the following theorem:

\begin{theorem}\label{thm:lenfpt}
Bounded C2C \textsc{Nondeterministic Constraint Logic} is fixed parameter tractable in $O^*(2^l)$ time.
\end{theorem}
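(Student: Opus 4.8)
The plan is to exploit the fact that, in the bounded setting, the \emph{set} of edges that must be reversed is completely determined by the input, so that the only real question is the order in which to reverse them. Since each edge may be reversed at most once, an edge can have differing orientations in $C_1$ and $C_2$ only if it is reversed exactly once, while an edge with the same orientation in $C_1$ and $C_2$ must not be reversed at all. Hence the set $S$ of edges that must be reversed is exactly the set of edges on which $C_1$ and $C_2$ disagree, and \emph{every} valid reconfiguration sequence reverses precisely the edges of $S$, each exactly once. In particular, any solution has length exactly $|S|$, so I would first reject immediately if $|S|>l$, and otherwise proceed knowing $|S|\le l$.

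The key observation I would then make is that the configuration reached after performing any prefix of a reconfiguration sequence depends only on which subset $T\subseteq S$ of edges has been reversed so far, and neither on the order in which they were reversed nor on anything else: it is simply $C_1$ with the orientations of the edges in $T$ flipped. Legality of this configuration is therefore a function of $T$ alone, and can be checked in polynomial time. This collapses the entire search space to the power set of $S$, which has $2^{|S|}\le 2^l$ elements.

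Next I would build (implicitly) a directed reachability graph whose nodes are the subsets $T\subseteq S$ corresponding to legal configurations, placing an arc from $T$ to $T\cup\{e\}$ whenever $e\in S\setminus T$ and the configuration associated with $T\cup\{e\}$ is legal. The starting configuration $C_1$ corresponds to $T=\emptyset$ and the target $C_2$ to $T=S$, and a valid bounded reconfiguration sequence is precisely a directed path from $\emptyset$ to $S$ in this graph: along such a path each edge of $S$ is reversed exactly once and every intermediate node is legal. I would therefore run a breadth-first search from $\emptyset$ and accept if and only if $S$ is reached. The graph has at most $2^l$ nodes, each with at most $l$ outgoing arcs, and each arc requires a single polynomial-time legality check, giving total running time $O^*(2^l)$.

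The main subtlety to verify—rather than a genuine obstacle—is that memoizing by \emph{subset} is sound: I must argue that because each edge reverses at most once, the moves available from a given subset $T$ are identical regardless of how $T$ was reached, so revisiting an already-seen subset is never necessary and restricting the search to the subset graph loses no solutions. This is immediate from the fact that the current configuration is a function of $T$ alone, which is exactly the property that the bounded constraint buys us and which fails for unbounded C2C (where an edge may be reversed multiple times, so the reachable state is not captured by a subset of size at most $l$).
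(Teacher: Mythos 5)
Your proposal is correct and follows essentially the same route as the paper: the paper also observes that exactly the edges on which $C_1$ and $C_2$ disagree must each be reversed exactly once (rejecting if there are more than $l$ of them), and then invokes a Held--Karp-style dynamic program over subsets to get $O^*(2^l)$. In fact, you spell out the detail the paper leaves implicit, and correctly note the slight simplification that the DP state need only be the subset $T$ itself (no ``last edge reversed'' component is needed), since the current configuration is a function of $T$ alone.
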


\begin{proof}
If the number of edges that differ in orientation between the goal and start configuration exceeds $l$, we may reject immediately because is is impossible to reverse all of them in $l$ moves. We may thus assume at most $l$ edges need to be reversed. Since in bounded NCL each edge may be reversed at most once, feasible solutions consist of some order of reversing the edges that need to be reversed; reversing an edge that does not need to be reversed makes the instance unsolvable, and reversing an edge more than once is not possible. There are thus at most $l!$ solutions each of which can be evaluated in polynomial time. This can be improved to a $O^*(2^l)$ time algorithm using a dynamic programming approach, similar to the Held-Karp algorithm \cite{heldkarp} for TSP.
\end{proof}

\subparagraph*{Parametrization by Maximum Degree.} The graphs created by the reduction in the proof of Theorem \ref{thm:lenhard} may have arbitrary degree, while Hearn et al. \cite{hdbook} consider only graphs of maximum degree 3. We show that the hardness result of Theorem \ref{thm:lenhard} is strict in this regard, that is, when we parameterize by the maximum degree $\Delta$ in addition to $l$, the problem becomes fixed parameter tractable.

\begin{theorem}\label{thm:LDfpt}
Both bounded and unbounded C2E and C2C \textsc{Nondeterministic Constraint Logic} are fixed parameter tractable with respect to $l+\Delta$.
\end{theorem}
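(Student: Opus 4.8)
The plan is to exploit the locality of short reconfiguration sequences: a sequence of length at most $l$ reverses at most $l$ distinct edges, so the ``active'' part of the graph it touches is small, and with the degree bound $\Delta$ this active part lies inside a region whose size depends only on $l$ and $\Delta$. Concretely, I would first prove a \emph{locality lemma}. Given any legal reconfiguration sequence $\sigma$, let $R$ be the set of edges it reverses (so $|R|\le l$), let $A$ be the set of endpoints of edges in $R$, and consider the connected components of the subgraph $(A,R)$. The key claim is that the components behave independently: if we delete from $\sigma$ all reversals of edges lying outside a fixed union $\mathcal{C}$ of components, the resulting subsequence $\sigma'$ is still legal and agrees with $\sigma$ on every edge of $\mathcal{C}$. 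This holds because every reversed edge incident to a vertex $x\in V(\mathcal{C})$ already lies in $\mathcal{C}$, so the inflow trajectory of each such $x$ is unchanged along $\sigma'$, while every vertex outside $V(\mathcal{C})$ has none of its incident edges reversed by $\sigma'$ and hence retains its (legal) start inflow.

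Using this lemma I would reduce each variant to a finite reachability problem on a bounded region. For \textsc{C2E} with target edge $e$, only the component of $e$ matters; since it has at most $l$ edges, all of its vertices lie within graph-distance $l$ of an endpoint of $e$, so it is contained in the ball $B$ of radius $l$ around $e$. As $G$ has maximum degree $\Delta$, the size $|B|$ — and hence the number of edges inside $B$ — is bounded by a function of $l$ and $\Delta$ only. For \textsc{C2C} with start and goal configurations, I would first reject if more than $l$ edges differ between them; otherwise every differing edge must be reversed, its component must be retained, and by the lemma the retained components lie inside the ball $B$ of radius $l$ around the (at most $2l$) endpoints of the differing edges, again of size bounded in $l$ and $\Delta$.

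Inside the relevant ball $B$ I would freeze every edge with an endpoint outside $B$ at its start orientation (the locality lemma guarantees an optimal sequence never reverses such an edge), folding its contribution into a fixed baseline inflow for each boundary vertex. What remains is an NCL instance on the internal edges of $B$, of which there are at most some $g(l,\Delta)$, so its configuration space has at most $2^{g(l,\Delta)}$ states. I would then perform an explicit reachability search (breadth-first search) over this state space, where a state is legal iff every internal vertex meets its minimum inflow under the baseline-plus-internal contributions. For \textsc{C2E} I accept iff some state reachable from the start reverses $e$; for \textsc{C2C} iff the goal, restricted to $B$, is reachable. For the \emph{bounded} variants I restrict the search to configurations obtained by reversing each internal edge at most once — equivalently a search over subsets of already-reversed edges — which for bounded \textsc{C2C} is already subsumed by Theorem~\ref{thm:lenfpt}. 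Since $|B|$, the number of internal edges, and the number of states are all bounded by functions of $l+\Delta$, the whole procedure runs in $f(l+\Delta)\cdot\mathrm{poly}(n)$ time.

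The main obstacle is establishing the locality lemma rigorously, in particular verifying that deleting the reversals of edges outside the retained components preserves legality at \emph{every} intermediate step and still reaches the required orientation — the argument must carefully match up the intermediate configurations of $\sigma$ and $\sigma'$ at the times of the retained reversals and check that untouched vertices never drop below their minimum inflow. The remaining ingredients — bounding $|B|$ via the degree bound and solving the bounded-size reachability instance by brute force — are routine, and it is precisely the degree bound $\Delta$ that is needed to keep $B$ (and thus the search space) finite, which is why the result is stated for the combined parameter $l+\Delta$ rather than $l$ alone.
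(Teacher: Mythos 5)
Your proposal is correct and follows the same overall strategy as the paper: exploit the locality of length-at-most-$l$ solutions to shrink the instance to a ball of radius $O(l)$ around the target edge (C2E) or around the at most $l$ differing edges (C2C), observe that the degree bound $\Delta$ makes this ball of size bounded in $l+\Delta$, and solve the small residual instance by brute force. The difference lies in how locality is justified. The paper argues by induction on $l$ that no edge at distance at least $l$ from the target edge ever needs to be reversed (omitting reversals of far edges one at a time, noting the vertex gaining inflow from such a reversal is not touched by later reversals), and for C2C appeals to the analogous statement around each of the at most $l$ differing edges; it then concludes fixed parameter tractability by observing this yields a kernelization. Your component-deletion lemma --- drop all reversals lying in connected components of the reversed-edge subgraph that do not contain the target or differing edges --- establishes the same fact in one shot, handles C2E and C2C uniformly, and is arguably easier to verify rigorously than the paper's induction: the invariant you state (inflow trajectories of vertices in retained components are reproduced exactly, while vertices outside them keep their legal start inflow) is precisely what is needed, including the endgame check for C2C that non-retained edges already sit in their goal orientation. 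One detail to fix: the parameterized question asks for a sequence of length \emph{at most $l$}, so your reachability search inside the ball must be depth-limited to $l$ (accept only if the desired state is reached within $l$ reversals); plain reachability could wrongly accept instances where the restricted region admits only longer sequences. Since BFS records distances anyway, this is an immediate repair, and with it your argument is complete.
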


\begin{proof}
For the $C2E$ variants, note that an edge that is at a distance greater than or equal to $l$ from the target edge never needs to be reversed in a solution of length at most $l$: The only valid solution of length at most $1$ is to reverse the target edge, which indeed is at distance $0$ from itself. Consider a solution of length at most $l+1$ and that it reverses an edge $e$ at distance $l+1$ or greater from the target edge. After this edge is reversed, we are left with a solution of length at most $l$ which by induction only needs to reverse edges at distances less than $l$ from the target edge. We note that we can omit the reversal of $e$ from the reconfiguration sequence, since the vertex that gains inflow as a result of that reversal is not incident to any of the edges reversed subsequently, and hence omitting this reversal also results in a valid solution.

For any candidate solution, we only need to consider edges at distance at most $l-1$ from the target edge, of which there are at most $O((\Delta-1)^{l-1})$. We may omit the other edges from the graph (adjusting the minimum inflows of their incident vertices according to the initial orientations of the edges) and consider only the graph restricted to these edges and their incident vertices.

In the $C2C$ variants, similar to the proof of Theorem \ref{thm:lenfpt}, we may assume at most $l$ edges need to be reversed. We can thus bound the size of the set of edges that can be reversed in any candidate solution by $O(l (\Delta-1)^{l-1})$, since as before, for any of the $l$ edges needing to be reversed, we are only concerned with the edges at distance at most $l$ from that edge. As before, we can omit the remaining edges from the graph.

The problems thus admit kernelizations and therefore are fixed parameter tractable. 
\end{proof}

\section{Applications}\label{sec:applications}

As a result of our hardness proof for NCL, we (nearly) automatically obtain tighter bounds for other problems that reduce from NCL in their hardness proofs. If the reduction showing $\PSPACE$-hardness for a problem is bandwidth-preserving in the sense that the bandwidth of the resulting graph only depends on the bandwidth of the original constraint graph, then that problem remains $\PSPACE$-hard even when limited to instances of bounded bandwidth. Since reductions from NCL usually work by locally replacing AND and OR vertices in the graph with gadgets that simulate their functionality, such reductions are often bandwidth-preserving.

\decision{Independent Set Reconfiguration (IS-R)}{Graph $G=(V,E)$, independent sets $S_s,S_g\subseteq V$ of $G$, integer threshold $k$.}{Is there a sequence of independent sets $S_1,\ldots,S_n$, so that for all $i$, $S_{i+1}$ is obtained from $S_i$ by the addition or removal of one vertex, $|S_i|\geq k$ and $S_1=S_s,S_n=S_g$?}

This is the \emph{token addition-and-removal (TAR)} version of IS-R. It is also possible to define a \emph{token jumping (TJ)} variant (in which we obtain one independent set from the next by removing and immediately replacing a vertex) and a \emph{token sliding} variant (in which when we remove a vertex, we must replace it with an incident vertex, i.e. "sliding" the token/vertex along an edge).

\begin{theorem}\label{thm:isr}
TAR, TJ and TS versions of \textsc{\normalsize Independent Set Reconfiguration} are $\PSPACE$-complete on planar, maximum degree 3 graphs of bounded bandwidth.
\end{theorem}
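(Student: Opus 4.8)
The plan is to reduce from restricted C2C \NCL on planar, bounded-bandwidth graphs (Theorem \ref{thm:c2cnclhard}) using the established NCL-to-Independent-Set-Reconfiguration constructions of Hearn and Demaine \cite{hdbook,is-r}, and to observe that these constructions are bandwidth-preserving. First I would recall the gadget-based reduction: each edge of the constraint graph is represented by an edge gadget whose token position encodes the edge's orientation, and each AND and OR vertex is replaced by a constant-size gadget of its own, connected to the edge gadgets of its incident edges. The threshold $k$ for the TAR variant (and the analogous token accounting for the TJ and TS variants) is chosen so that the global count of tokens exactly mirrors the minimum-inflow constraints, making legal configurations correspond to independent sets above the threshold and single edge reversals correspond to the allowed token moves. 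Correctness of this correspondence is inherited from the existing reductions; I would only need to re-examine that the gadgets are planar and have maximum degree $3$.

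Second, I would verify bandwidth preservation. Fix a linear ordering $f$ of the constraint graph witnessing bandwidth $c$ (from Theorem \ref{thm:c2cnclhard}). Since the reduction replaces each constraint-graph vertex by a gadget containing only a constant number $g$ of new vertices, and since two gadgets are joined by edges only when the corresponding constraint-graph vertices are adjacent, I would build an ordering of the new graph by expanding each position $f(v)$ into a block of $g$ consecutive positions in an arbitrary internal order. Edges internal to a gadget then span at most $g$ positions, while edges between gadgets of adjacent vertices $u,v$ span at most $g\cdot(|f(u)-f(v)|+1)\le g(c+1)$ positions. Hence the new graph has bandwidth at most $g(c+1)$, again a constant depending only on $H$.

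Third, I would assemble the three token models. The same gadget layout yields the TS (sliding) instance directly; for TJ and TAR I would invoke the corresponding NCL reductions \cite{is-r}, each of which is likewise a local vertex-for-gadget replacement and therefore planar, degree-$3$ and bandwidth-preserving by the identical argument. Combining planarity, the degree bound, and the bandwidth bound gives $\PSPACE$-completeness, with membership in $\PSPACE$ being standard since any reconfiguration step is polynomial-time checkable and the configuration space is finite.

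The main obstacle I anticipate is the \emph{simultaneous} satisfaction of the three structural requirements. Each alone is routine, but the standard sliding-token gadgets are tuned for planarity rather than for degree $3$, so I would have to check (or lightly modify, e.g.\ by subdividing high-degree junctions) that every gadget vertex has degree at most $3$ without destroying planarity or inflating the gadget beyond constant size; any constant-size modification leaves the bandwidth bound intact. A secondary point to verify carefully is that the token-counting used to set the TAR threshold $k$ remains correct after these local modifications, since added vertices can change the number of tokens present in a fixed configuration.
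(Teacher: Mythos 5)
Your proposal matches the paper's proof in essence: both reduce from Theorem \ref{thm:c2cnclhard} via Hearn and Demaine's constant-size gadget replacement of AND/OR vertices, observe that this local replacement multiplies the bandwidth by only a constant factor, and thereby obtain planarity, maximum degree $3$, and bounded bandwidth simultaneously. The only minor difference is that the paper obtains the TAR variant directly from the known equivalence of TAR and TJ rather than by re-deriving a threshold-counting argument, and it need not modify the gadgets since they already have maximum degree $3$.
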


\begin{figure}[b]
     \centering
     \hfill
     \subfloat[][OR vertex] {
         \includegraphics[scale=0.4]{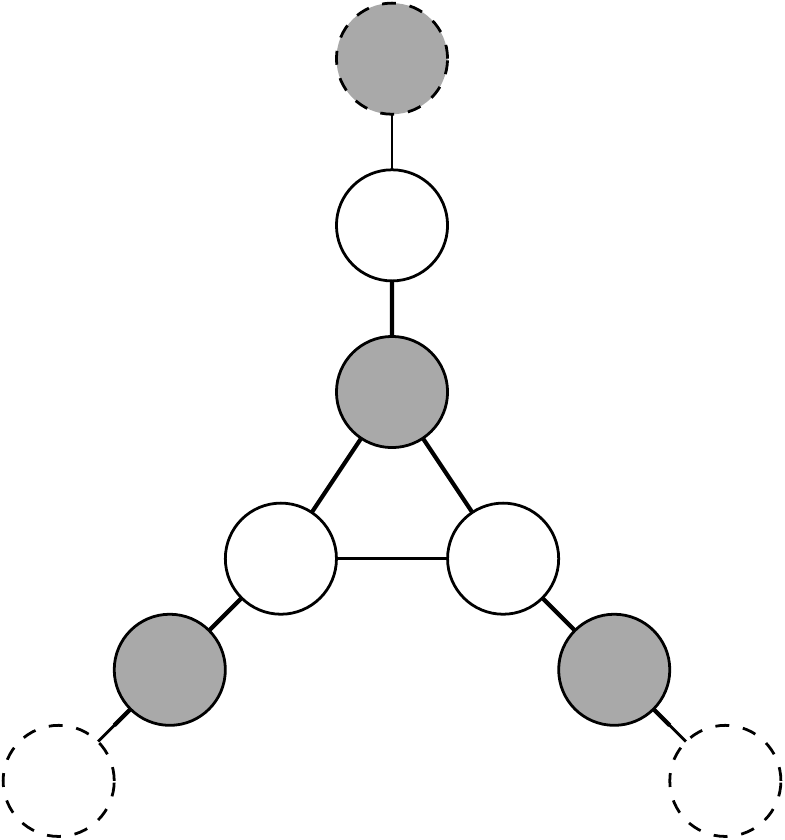}
         \label{fig:orvisr}
     }
     \hfill
     \subfloat[][AND vertex] {
         \includegraphics[scale=0.4]{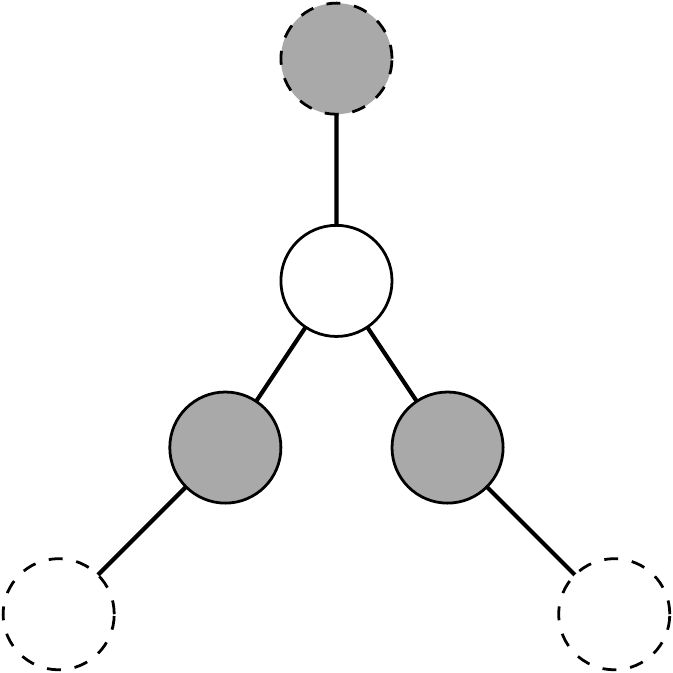}
         \label{fig:andvisr}
     }
     \hfill\null
     \caption{Gadgets used in the reduction from IS-R to NCL. Dashed vertices represent vertices that are part not of the gadget itself, but of other gadgets. The dark gray vertices represent an example independent set.}
     \label{fig:andorvisr}
\end{figure}

\begin{proof}
Hearn and Demaine \cite{hearn05} provide a reduction from IS-R to TJ and TS versions of IS-R. The gadgets used in their reduction are reproduced in figure \ref{fig:andorvisr}. Since the reduction works by replacing every vertex with a construction of at most 6 other vertices, the bandwidth of the graph created in the reduction is at most 6 times the bandwidth of the original constraint graph. The theorem thus follows immediately from our hardness result and Hearn and Demaine's reduction \cite{hearn05} and the fact that the TAR version is equivalent to TJ \cite{is-r}.
\end{proof}

Similarly to \textsc{\normalsize Independent Set Reconfiguration}, we can define reconfiguration versions of \textsc{\normalsize Vertex Cover} (VC-R), \textsc{\normalsize Feedback Vertex Set} (FVS-R), \textsc{\normalsize Induced Forest} (IF-R), \textsc{\normalsize Odd Cycle Transversal} (OCT-R) and \textsc{\normalsize Induced Bipartite Subgraph} (IBS-R). Note that for IS-R, IF-R, IBS-R the size of the vertex subset is never allowed to drop below the threshold, while for VC-R, FVS-R and OCT-R the size is never allowed to exceed the threshold. These problems are $\PSPACE$-complete on bounded bandwidth graphs by reduction from \textsc{\normalsize $H$-Word Reconfiguration} \cite{raman14}. We strengthen this result, noting that our proof follows the same reasoning as in \cite{raman14}:

\begin{theorem}
TAR, TJ and TS versions of \textsc{\normalsize Independent Set} (IS-R), \textsc{\normalsize Vertex Cover} (VC-R), \textsc{\normalsize Feedback Vertex Set} (FVS-R), \textsc{\normalsize Induced Forest} (IF-R), \textsc{\normalsize Odd Cycle Transversal} (OCT-R) and \textsc{\normalsize Induced Bipartite Subgraph} (IBS-R) are $\PSPACE$-complete on planar graphs of bounded bandwidth and low maximum degree.
\end{theorem}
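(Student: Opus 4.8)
The plan is to derive all six results from Theorem~\ref{thm:isr} by composing it with the gadget reductions of \cite{raman14}, whose defining feature is that they are \emph{local}: each replaces a single vertex or edge of the input graph by a gadget of constant size. Such local replacements preserve planarity, keep the maximum degree bounded, and increase the bandwidth by at most a constant factor (every gadget introduced for an edge or vertex in bag $B_i$ of the layout from Theorem~\ref{thm:c2cnclhard} stays within a constant-width window around it). Thus, feeding \cite{raman14}'s constructions the planar, maximum-degree-$3$, bounded-bandwidth IS-R instance of Theorem~\ref{thm:isr}, rather than a generic bounded-bandwidth instance, yields the three properties simultaneously.

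First I would dispatch the three complement pairs. VC-R is exactly the complement of IS-R: $S$ is independent in $G$ iff $V\setminus S$ is a vertex cover, and the TAR, TJ and TS semantics translate directly (adding a token to the independent set corresponds to removing one from the cover, and vice versa), with only the direction of the threshold inequality reversed. Hence VC-R hardness follows from Theorem~\ref{thm:isr} on the identical graph. The pairs (IF-R, FVS-R) and (IBS-R, OCT-R) are related in precisely the same way, since a feedback vertex set is the complement of an induced forest and an odd cycle transversal is the complement of an induced bipartite subgraph. It therefore suffices to prove hardness of IS-R (already done), IF-R, and IBS-R; the remaining three follow by complementation on the same graph, with no change to planarity, degree, or bandwidth.

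For IF-R and IBS-R I would follow \cite{raman14}, reducing from the IS-R instance of Theorem~\ref{thm:isr}. The construction transforms $G$ into $G'$ by attaching to each edge $uv$ a constant-size gadget (built around a forced vertex $w_{uv}$ adjacent to both $u$ and $v$, together with a tree-like structure that pins $w_{uv}$ into every threshold-feasible solution) so that selecting both $u$ and $v$ completes a triangle---an odd cycle and in particular a cycle. Consequently $G'[S']$ is acyclic (resp. bipartite) exactly when the projection of $S'$ onto the original vertices is independent in $G$; the threshold is shifted by the fixed number of forced gadget vertices. Each gadget is planar, of bounded degree, and embedded in a face incident to the edge it replaces, so all three structural parameters are preserved.

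The main obstacle will be verifying that each gadget simulates \emph{reconfiguration}, not merely the static constraint, across all three token models, and in particular that the token-sliding variant---the most rigid---admits exactly the intended moves through the gadgets without unintended shortcuts or deadlocks; this is where the pinning structure around $w_{uv}$ must be chosen with care. A secondary obstacle is the planarity bookkeeping: one must confirm that the gadgets can be inserted into the fixed planar embedding of the Theorem~\ref{thm:isr} instance without introducing crossings, which is routine for the triangle-type gadgets but must be checked against that embedding. Membership in $\PSPACE$ is immediate, since a reconfiguration sequence through configurations of polynomial description length can be guessed and verified in polynomial space. Composing Theorem~\ref{thm:isr} with these local reductions then yields $\PSPACE$-completeness on planar, bounded-bandwidth, low-degree instances for all six problems.
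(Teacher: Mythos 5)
Your overall skeleton matches the paper's: everything is derived from Theorem~\ref{thm:isr} via the complementation pairs (IS/VC, IF/FVS, IBS/OCT) plus a per-edge triangle gadget, together with the observation that these replacements are local and hence preserve planarity, bounded degree and bounded bandwidth. The substantive difference is the \emph{direction} in which you run the triangle reduction, and that choice is where your proposal has a genuine gap. The paper goes from VC-R to FVS-R/OCT-R: replace every edge $uv$ of $G$ by a triangle $u,v,w_{uv}$; a set $S\subseteq V(G)$ is a vertex cover of $G$ iff it is a feedback vertex set (equivalently, odd cycle transversal) of the new graph $G'$; the start and goal configurations contain no $w$-vertices, and nothing needs to be forced --- an intermediate feedback vertex set that puts a token on $w_{uv}$ only wastes budget and can be projected back to an endpoint of $uv$ in the correctness argument (this is the reasoning the paper inherits from \cite{raman14}). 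IF-R and IBS-R then follow by exact complementation. You instead go from IS-R to IF-R/IBS-R directly, which requires that \emph{every} threshold-feasible induced forest contain all the $w_{uv}$; you delegate this to ``a tree-like structure that pins $w_{uv}$,'' but no such structure works the way you describe. Attaching trees or pendant vertices to $w_{uv}$ cannot force $w_{uv}$ into a large induced forest: excluding $w_{uv}$ leaves the attached trees acyclic, so exclusion costs exactly one token, and any nontrivial TAR sequence already operates with at least one token of slack above the threshold (otherwise no removal is ever possible), so a one-token penalty pins nothing; under TS a token sitting on $w_{uv}$ can moreover simply slide onto $u$ or $v$. Once a $w_{uv}$ can leave, your stated equivalence (``$G'[S']$ is acyclic iff the projection of $S'$ is independent'') fails, and the completeness direction of your reduction breaks unless you replace pinning by a counting/projection argument you do not give. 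Forcing a vertex \emph{into} every feasible solution of a lower-bounded (maximization) reconfiguration problem is genuinely awkward; this is exactly why the paper works on the upper-bounded (minimization) side, where no forcing is needed.

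The repair is simple and turns your proposal into the paper's proof: flip the direction. Prove FVS-R and OCT-R from VC-R on the triangle graph (planar, maximum degree $6$, bandwidth larger by at most a constant factor), then obtain IF-R and IBS-R by complementation, which --- as you correctly argue --- is exact for TAR, TJ and TS alike. Your remaining concern, that the gadget correspondence must be verified at the level of reconfiguration sequences in all three token models rather than just statically, is legitimate, but it applies equally to the paper, which settles it by following the reasoning of \cite{raman14} rather than by new gadgetry.
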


\begin{proof}
By Theorem \ref{thm:isr}, IS-R is $\PSPACE$-complete on planar graphs of bounded bandwidth and maximum degree 3. Since an independent set is the complement of a vertex cover, the theorem holds for VC-R on maximum degree 3 graphs.

By replacing every edge in the graph by a triangle, we can reduce VC-R to FVS-R (since picking at least one vertex of every edge is equivalent to picking at least one vertex of every triangle), thus showing the theorem for FVS-R on maximum degree 6 graphs. We note that in such a graph a feedback vertex set is also an odd cycle transversal, thus also showing the theorem for OCT-R on maximum degree 6 graphs.

Finally, the theorem holds for IF-R and IBS-R on maximum degree 6 graphs by considering complements of solutions for FVS-R and OCT-R.
\end{proof}

Another related reconfiguration problem is that of \textsc{\normalsize Dominating Set Reconfiguration} (DS-R). In \cite{ds-r}, the authors show that DS-R is $\PSPACE$-complete on planar graphs of maximum degree six by reduction from NCL and $\PSPACE$-complete on graphs of bounded bandwidth by a reduction from VC-R. The following theorem that unifies these results follows immediately from our improved result concerning VC-R:

\begin{theorem}
The TAR version of \textsc{\normalsize Dominating Set Reconfiguration} is $\PSPACE$-complete, even on planar, maximum degree six graphs of bounded bandwidth.
\end{theorem}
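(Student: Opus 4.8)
The plan is to reduce from \textsc{\normalsize Vertex Cover Reconfiguration} (VC-R), reusing the VC-R-to-DS-R reduction of \cite{ds-r} and checking only that it preserves the structural parameters we care about. Since the preceding theorem (built on Theorem \ref{thm:isr}) already gives that VC-R is $\PSPACE$-complete on planar graphs of maximum degree $3$ and bounded bandwidth, it suffices to verify that this known reduction simultaneously preserves planarity, keeps the maximum degree at most six, and is bandwidth-preserving. Membership in $\PSPACE$ is automatic, since the reconfiguration graph has size exponential in the input and reachability between two configurations can be decided in polynomial space, so all the work is on the hardness side.

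First I would recall the construction of \cite{ds-r}: given a VC-R instance on a graph $G$, it produces a DS-R instance on a graph $G'$ obtained by attaching a bounded-size local gadget to each vertex (and/or subdividing each edge with a gadget vertex), arranged so that subsets of the original vertices that form a vertex cover of $G$ correspond exactly to dominating sets of $G'$ of the prescribed size, with single-token additions and removals translating one-to-one between the two instances. The key structural feature to extract is that this is a \emph{local} transformation: every original vertex and edge is replaced by a constant-size planar gadget, and no gadget introduces an adjacency between parts of $G'$ coming from non-adjacent parts of $G$.

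Next I would argue bandwidth preservation by exactly the lifting argument used in the proof of Theorem \ref{thm:isr}. Because each vertex of $G$ expands into only a constant number of gadget vertices and no new long-range edges are created, any linear ordering of $V(G)$ witnessing bandwidth $b$ can be turned into an ordering of $V(G')$ by replacing each original vertex by its consecutively placed block of gadget vertices; this inflates the bandwidth by at most a constant factor depending only on the gadget size. Planarity is preserved because the gadgets are themselves planar and are substituted locally, and the maximum degree of $G'$ is bounded by six: the original degree of $3$ together with the bounded number of gadget-incident edges stays within the degree-six bound that \cite{ds-r} already achieves for their planar DS-R hardness result.

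Combining these three observations with our improved VC-R hardness yields that DS-R is $\PSPACE$-hard on planar, maximum-degree-six graphs of bounded bandwidth, which together with the $\PSPACE$ membership above proves the theorem. The hard part here is not conceptual but one of verification: one must inspect the specific gadget of \cite{ds-r} to confirm that it indeed introduces no non-local edges and does not push the maximum degree above six, so that the bandwidth argument of Theorem \ref{thm:isr} applies verbatim. As the author notes, once this is checked the statement follows immediately from the improved VC-R result, since the reduction is of precisely the local, gadget-replacement type that is automatically bandwidth-preserving.
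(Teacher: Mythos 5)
Your proposal matches the paper's argument: the paper likewise obtains this theorem by applying the known VC-R-to-DS-R reduction of \cite{ds-r} to the strengthened VC-R hardness result (planar, maximum degree $3$, bounded bandwidth), observing that this local gadget-per-edge reduction preserves planarity and bounded bandwidth while raising the maximum degree to at most six. The paper states this as following ``immediately''; your write-up simply makes explicit the same verification of locality, degree, and bandwidth preservation that the paper leaves implicit.
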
 

The puzzle game Rush Hour, in which the player moves cars horizontally and vertically with the goal to free a specific car from the board, is $\PSPACE$-complete \cite{bf02} when played on an $n\times n$ board. Another consequence of our result is that Rush Hour is $\PSPACE$-complete even on rectangular boards where one of the dimensions of the board is constant:

\begin{theorem}\label{thm:rushhour}
There exists a constant $k$, so that Rush Hour is $\PSPACE$-complete when played on boards of size $k\times n$.
\end{theorem}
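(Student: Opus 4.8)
The plan is to compose Hearn and Demaine's reduction from NCL to Rush Hour \cite{hearn05} with the bounded-bandwidth hardness of restricted NCL (Theorems \ref{thm:c2cnclhard} and \ref{thm:c2enclhard}), and to control the board dimensions by way of cutwidth. First I would observe that a graph of bandwidth at most $c$ and maximum degree $\Delta$ has cutwidth at most $c\Delta$: in a linear arrangement witnessing the bandwidth, every edge crossing a given cut has its left endpoint among the $c$ positions immediately to the left of the cut, and each such vertex contributes at most $\Delta$ edges across the cut. Since the constraint graphs of Theorem \ref{thm:c2cnclhard} are $3$-regular and of bounded bandwidth, their cutwidth is bounded by a constant.

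Next I would recall the structure of the reduction of \cite{hearn05}: each AND and OR vertex is replaced by a car gadget occupying a bounded region of the board, and each constraint-graph edge is simulated by a \emph{wire}, a channel of cars whose position encodes the orientation of the edge; the gadgets are placed on the board following a drawing of the constraint graph. The key new ingredient is to choose this drawing according to a minimum-cutwidth linear arrangement of the graph. Placing the vertex gadgets left to right in this order and routing each edge as a horizontal wire spanning the columns between its endpoints, the number of wires crossing any vertical line equals the number of constraint-graph edges crossing the corresponding cut, hence is at most the cutwidth. Reserving a constant number of rows per simultaneously active wire, the height of the board is a constant $k$, while its width grows only linearly in the size of the constraint graph.

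Because the linear arrangement need not respect a planar embedding, wires will in general have to cross one another. I would resolve each crossing using the crossover gadget of \cite{hearn05}; since at most a constant (roughly the cutwidth squared) number of crossings must be inserted in any vertical band of the board, these crossover gadgets also fit within the bounded-height strip, and the width remains linear. Membership in $\PSPACE$ is immediate, since a board configuration is described in polynomial space and successor moves are polynomial-time checkable, so the reachability question lies in $\PSPACE$.

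The main obstacle I anticipate is verifying that Hearn and Demaine's gadgets genuinely \emph{compose} within a strip of bounded height: one must check that vertex gadgets, wire channels and crossover gadgets can be interleaved and connected without their vertical extents accumulating, so that the height stays bounded independently of $n$. This is exactly where the constant cutwidth bound is used --- it caps the number of wires, and hence of rows, that must be live at any horizontal position --- and the remaining work is the (routine but careful) bookkeeping of fitting the gadgets into $O(\mathrm{cutwidth})$ rows.
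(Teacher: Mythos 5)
Your proposal matches the paper's proof essentially step for step: both pass from bounded bandwidth (plus degree~3) to bounded cutwidth, lay the vertices out along the long axis of the board according to a cutwidth-optimal linear arrangement with the edges routed in the bounded dimension (the paper uses the transposed but equivalent bookkeeping of a $(c+1)\times 3n$ grid with vertices in one column and each edge in its own column), eliminate the resulting crossings with constant-size crossover gadgets, and then apply Hearn and Demaine's Rush Hour gadgets. The only slip is attributive: the crossover used to remove crossings is the constraint-graph-level gadget of \cite{hdbook} rather than a gadget of \cite{hearn05} (whose Rush Hour reduction starts from planar graphs and needs no board-level crossover), but this does not affect the argument.
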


\begin{proof}
Hearn and Demaine \cite{hearn05} provide a reduction from restricted NCL, showing how to construct AND and OR vertices as gadgets in Rush Hour and how to connect them together using straight line and turn gadgets. Given a planar constraint graph, it can be drawn in the grid (with vertices on points of the grid and edges running along the lines of the grid), after which vertices can be replaced by their appropriate gadgets and edges with the necessary line and turn gadgets. However, starting with a graph of bounded bandwidth does not immediately give a suitable drawing in a grid of which one of the dimensions is bounded (from which the theorem would follow, since all of the gadgets have constant size).

Given a restricted NCL graph of bounded bandwidth, we note that this graph also has bounded cutwidth \cite{bodl98}. Given such a graph with $n$ vertices and cutwidth at most $c$, it is possible to arrange it in a $(c+1)\times 3n$ grid, so that vertices of the graph are mapped to vertices of the grid, the edges of the graph run along edges of the grid, and no two edges or vertices get mapped to the same edge or vertex in the grid. This is achieved by placing all of the vertices of the graph in a single column (noting that 3 rows are required for each vertex since it has 3 incident edges) and (due to the graph having cutwidth $c$) the edges can be placed in the remaining $c$ columns (placing each edge in a distinct column). However, even when starting with a planar NCL graph, the resulting embedding may have crossings. These can be eliminated using the crossover gadget, noting that since the crossover gadget has constant size, the resulting graph can still be drawn in a $O(c)\times O(n)$ grid. We can now use the existing gadgets from \cite{hearn05} to finish the reduction, showing Rush Hour $\PSPACE$-complete on boards of which one of the dimensions is bounded by a constant. 
\end{proof}

Note that this result is likely to carry over to show other board games $\PSPACE$-complete on $n\times k$ boards, such as Sokoban or Plank Puzzles, which also reduce from NCL in their hardness proofs \cite{hdbook}.

This result is in contrast to \cite{pegs}, where it is shown that Peg Solitaire, when played on $k\times n$ boards, is linear time solvable for any fixed $k$. An important distinction here is that the length of a solution in a Peg Solitaire game is \emph{bounded} by the number of pegs (since every move removes one peg from play) whereas Rush Hour games are \emph{unbounded}: any length move sequence is possible, though obviously after an exponential number of moves, positions would be repeated. 

\section{Conclusions}

We have studied the parameterized complexity of constraint logic problems with regards to (combinations of) solution length, maximum degree and treewidth as parameters. As a main result, we showed that \textsc{\normalsize restricted Nondeterministic Constraint Logic} remains $\PSPACE$-complete on graphs of bounded bandwidth, strengthening Hearn and Demaine's framework \cite{hdbook}.

We showed that when parameterized by solution length, C2E, C2C and bounded C2E variants of \NCL are $\W[1]$-hard, but bounded C2C NCL becomes fixed parameter tractable. Essentially, the hardness of bounded C2E NCL when parameterized by solution length is due to the complexity of finding a suitable target configuration, rather than from that of determining a reconfiguration sequence. However, when parametrizing by maximum degree in addition to solution length, all cases become fixed parameter tractable.

When parameterized by treewidth, \textsc{Constraint Graph Satisfiability} becomes weakly $\NP$-complete (rather than strongly $\NP$-complete). When considering maximum degree in addition to treewidth, CGS becomes fixed parameter tractable. Note that this is in stark contrast to the complexity of NCL, which remains $\PSPACE$-complete even for fixed values of this parameter. This is an interesting example of how reconfiguration problems can be much harder than their decision variants.

Bounded NCL also becomes fixed parameter tractable when parameterized by treewidth and maximum degree. In the case where bounded NCL is parameterized by treewidth alone we proved weak $\NP$-hardness. It is an open problem whether there exists a pseudopolynomial algorithm for this case, or if the problem is also strongly $\NP$-complete.

By combining Wrochna's proof \cite{wrochna14} and Hearn and Demaine's \cite{hdbook} constraint logic techniques, we have managed to get the best of both worlds: for several reconfiguration problems, we showed hardness for graphs that are not only planar and have low maximum degree, but that also have bounded bandwidth. This not only strengthens Wrochna's results, but also makes it easier to prove hardness for reconfiguration on bounded bandwidth graphs by merging \textsc{\normalsize $H$-Word Reconfiguration} into the more convenient NCL framework.

Note that the constant $c$ in Theorem \ref{thm:c2cnclhard} has not been calculated precisely, but an informal analysis of Wrochna's proof \cite{wrochna14} and our reduction suggests it is very large (growing with the $12^\textrm{th}$ power of the original instance size). This raises two open questions: one is to determine a tighter bound on the value of $c$, and the other is to determine whether efficient algorithms exist for solving reconfiguration problems when the graph's bandwidth is bounded by more practical values. Some progress in this direction has already been made \cite{tjtrees}.

We have studied the parameterized complexity of CGS and NCL. Hearn and Demaine \cite{hdbook} have defined several other problems related to constraint graphs, including two player and multiple player constraint graph games, complete for classes such as $\EXPTIME$ and $\NEXPTIME$. Studying these problems in a parameterized setting gives rise to several interesting open problems.

\textbf{Acknowledgement.} I thank my advisor, Hans Bodlaender, for his guidance and useful comments and suggestions.

\bibliography{ref}{}

\end{document}